%% file: main.tex
\documentclass[acmsmall]{acmart}

\usepackage{bussproofs}
\usepackage{cleveref}
\usepackage{makecell}
\usepackage{paralist}
\usepackage{listings}
\usepackage{multirow}
\usepackage{xspace}
\usepackage{subcaption}
\usepackage{thmtools}

\AtBeginDocument{%
  }

\setcopyright{cc}
\setcctype{by}
\acmDOI{10.1145/3808341}
\acmYear{2026}
\acmJournal{PACMPL}
\acmVolume{10}
\acmNumber{PLDI}
\acmArticle{263}
\acmMonth{6}
\acmSubmissionID{pldi26main-p807-p}
\received{2025-11-14}
\received[accepted]{2026-04-03}

\begin{document}

\input{macros}

\theoremstyle{remark}
\newtheorem{remark}{Remark}

\title{Simplifying Safety Proofs with Forward-Backward Reasoning and Prophecy}

\author{Eden Frenkel}
\orcid{0009-0009-4589-2173}
\affiliation{%
  \institution{Tel Aviv University}
  \city{Tel Aviv}
  \country{Israel}
}
\email{edenfrenkel@mail.tau.ac.il}

\author{Kenneth L. McMillan}
\orcid{0009-0000-9380-1939}
\affiliation{%
  \institution{University of Texas at Austin}
  \city{Austin}
  \country{USA}
}
\email{kenmcm@cs.utexas.edu}

\author{Oded Padon}
\orcid{0009-0006-4209-1635}
\affiliation{%
  \institution{Weizmann Institute of Science}
  \city{Rehovot}
  \country{Israel}
}
\email{oded.padon@weizmann.ac.il}

\author{Sharon Shoham}
\orcid{0000-0002-7226-3526}
\affiliation{%
  \institution{Tel Aviv University}
  \city{Tel Aviv}
  \country{Israel}
}
\email{sharon.shoham@gmail.com}

\begin{abstract}
We propose an incremental approach for safety proofs that decomposes a proof with a complex inductive invariant into a sequence of simpler proof steps. Our proof system combines rules for (i)~forward reasoning using inductive invariants, (ii)~backward reasoning using inductive invariants of a time-reversed system, and (iii)~prophecy steps that add witnesses for existentially quantified properties. We prove each rule sound and give a construction that recovers a single safe inductive invariant from an incremental proof. The construction of the invariant demonstrates the increased complexity of a single inductive invariant compared to the invariant formulas used in an incremental proof, which may have simpler Boolean structures and fewer quantifiers and quantifier alternations. Under natural restrictions on the available invariant formulas, each proof rule strictly increases proof power. That is, each rule allows to prove more safety problems with the same set of formulas. Thus, the incremental approach is able to reduce the search space of invariant formulas needed to prove safety of a given system. A case study on Paxos, several of its variants, and Raft demonstrates that forward-backward steps can remove complex Boolean structure while prophecy eliminates quantifiers and quantifier alternations.
\end{abstract}

\begin{CCSXML}
<ccs2012>
   <concept>
       <concept_id>10003752.10003790.10002990</concept_id>
       <concept_desc>Theory of computation~Logic and verification</concept_desc>
       <concept_significance>500</concept_significance>
       </concept>
   <concept>
       <concept_id>10003752.10003790.10003792</concept_id>
       <concept_desc>Theory of computation~Proof theory</concept_desc>
       <concept_significance>500</concept_significance>
       </concept>
   <concept>
       <concept_id>10003033.10003039.10003041.10003042</concept_id>
       <concept_desc>Networks~Protocol testing and verification</concept_desc>
       <concept_significance>300</concept_significance>
       </concept>
   <concept>
       <concept_id>10003752.10010124.10010138.10010142</concept_id>
       <concept_desc>Theory of computation~Program verification</concept_desc>
       <concept_significance>500</concept_significance>
       </concept>
   <concept>
       <concept_id>10003752.10003790.10003794</concept_id>
       <concept_desc>Theory of computation~Automated reasoning</concept_desc>
       <concept_significance>300</concept_significance>
       </concept>
 </ccs2012>
\end{CCSXML}

\ccsdesc[500]{Theory of computation~Logic and verification}
\ccsdesc[500]{Theory of computation~Proof theory}
\ccsdesc[300]{Networks~Protocol testing and verification}
\ccsdesc[500]{Theory of computation~Program verification}
\ccsdesc[300]{Theory of computation~Automated reasoning}

\keywords{deductive verification, incremental proofs, forward analysis, backward analysis, prophecy, Paxos, Raft.}

\maketitle

\def\defaultHypSeparation{\ } %

\input{0-intro}

\input{2-overview}

\input{1-preliminaries}

\input{3-incremental}

\input{4-fwd-bwd}

\input{5-witnesses}

\input{6-eval}

\input{7-related}

\section*{Data Availability}
The data and code supporting this paper are available in the paper's artifact~\cite{artifact} and as part of the \href{https://github.com/wilcoxjay/mypyvy}{\texttt{mypyvy} tool}~\cite{mypyvy}.

\begin{acks}
We thank James R. Wilcox for developing the \texttt{mypyvy} tool and for his assistance in integrating our proof checker into it.
We thank Raz Lotan and the anonymous reviewers for their helpful and insightful comments.
The research leading to these results has received funding from the
European Research Council under the European Union's Horizon 2020 research and innovation programme (grant agreement No [759102-SVIS]).
This research was partially supported by Israel Science Foundation (ISF) grant No.\ 2117/23.
This research was partially supported by Israel Science Foundation research grant (ISF's No.\ 4136/25) and the Maimonides Fund's Future Scientists Center,
a research grant from the Center for New Scientists at the Weizmann Institute of Science, and a grant from the Azrieli Foundation.

\end{acks}

\ifarxiv
\appendix
\crefalias{section}{appendix}

\input{8-proof-appendix}
\fi

\bibliographystyle{ACM-Reference-Format}
\bibliography{refs}

\end{document}

%% file: macros.tex
\newif\ifarxiv
\arxivtrue
\newif\ifproofs
\proofstrue

\newcommand{\nats}{\mathbb{N}}

\newcommand{\states}{\mathbb{S}}
\newcommand{\family}{\mathcal{F}}
\newcommand{\powerset}{\mathcal{P}}
\newcommand{\reach}{\mathcal{R}}

\newcommand{\seq}[1]{{\langle #1 \rangle}}
\newcommand{\fbinv}{\operatorname{Inv}}

\newcommand{\false}{\textbf{false}}
\newcommand{\true}{\textbf{true}}

\newcommand{\lang}{\mathcal{L}}
\newcommand{\wildquant}{{*}}

\newcommand{\predicates}{\mathbb{P}}
\newcommand{\fwd}[2]{\mathbf{F}^{#1}_{#2}}
\newcommand{\fwdbwd}[2]{\mathbf{FB}^{#1}_{#2}}
\newcommand{\fwdbwdproph}[2]{\mathbf{FBP}^{#1}_{#2}}

\newcommand{\Inv}{\operatorname{Inv}}
\newcommand{\FInv}{\Inv^\rightarrow}
\newcommand{\FBInv}{\Inv^\rightleftharpoons}
\newcommand{\FBPInv}{\Inv_\mathrm{pro}^\rightleftharpoons}
\newcommand{\Tr}{T}
\newcommand{\fwdstep}{\mathbf{fwd}}
\newcommand{\bwdstep}{\mathbf{bwd}}

\newcommand{\Fsys}{\mathbf{F}}
\newcommand{\FIsys}{\mathbf{FI}}
\newcommand{\FBIsys}{\mathbf{FBI}}
\newcommand{\FBPIsys}{\mathbf{FBPI}}

\renewcommand{\phi}{\varphi}

\newcommand{\sort}[1]{\mathsf{#1}}
\newcommand{\snode}{\sort{node}}
\newcommand{\squorum}{\sort{quorum}}
\newcommand{\sround}{\sort{round}}
\newcommand{\svalue}{\sort{value}}
\newcommand{\relation}[1]{{\mathit{#1}}}
\newcommand{\rmember}{\relation{member}}
\newcommand{\ronea}{\relation{start\_round\_msg}}
\newcommand{\ronebnovote}{\relation{join\_msg\_no\_vote}}
\newcommand{\ronebmaxvote}{\relation{join\_msg}}
\newcommand{\rproposal}{\relation{propose\_msg}}
\newcommand{\rvote}{\relation{vote\_msg}}
\newcommand{\rcurrentround}{\relation{current\_round}}
\newcommand{\rdecision}{\relation{decision}}
\newcommand{\action}[1]{{\textsc{#1}}}
\newcommand{\asendonea}{\action{start\_round}}
\newcommand{\ajoinround}{\action{join\_round}}
\newcommand{\apropose}{\action{propose}}
\newcommand{\acastvote}{\action{vote}}
\newcommand{\adecide}{\action{learn}}
\newcommand{\none}{{\bot}}
\newcommand{\theory}{{\Gamma}}
\newcommand{\theorytotalorder}{{\theory_\text{total order}}}

\newcommand{\updr}{PDR$^\forall$\xspace}

%% file: 0-intro.tex
\section{Introduction}

A prominent approach for verifying safety properties of complex systems is based on inductive invariants. 
Inductive invariants are properties that hold in the initial states and are preserved by every step of the system, and ergo they hold in all reachable states of the system. 
Therefore, a safety property can be proved by finding an inductive invariant that implies it (a \emph{safe inductive invariant}).
As systems and properties become more complex, the inductive invariants required to establish their safety may become extremely complex. 
In particular, if we use formulas in first-order logic to model systems, their safety properties, and their inductive invariants,
the complexity of inductive invariants is exhibited in the Boolean structure of the corresponding formulas, the number of quantifiers, and the structure of quantifier alternations. 
The complexity of inductive invariants poses a major challenge for verification. First, complex inductive invariants induce a huge search space, making it difficult to synthesize them automatically. Second, even validating that a formula is indeed an inductive invariant becomes challenging, both because of scale and because of the effect of quantifier structure on the complexity and decidability of validating inductiveness. %

One common methodology for simplifying safety proofs relies on simplifying the components being verified. This is the general idea behind modular reasoning, which divides the verification task to sub-tasks that are simpler to reason about such that the safety proof of the overall system follows from the proofs of the sub-tasks. Examples of this methodology include assume-guarantee reasoning~\cite{DBLP:phd/ethos/Jones81,DBLP:conf/ifip/Jones83,DBLP:journals/toplas/AbadiL93,DBLP:journals/toplas/AbadiL95,pnueli-modular,DBLP:conf/cav/HenzingerQR98} and Owicki-Gries proofs~\cite{DBLP:journals/acta/OwickiG76}, where each sub-task verifies a component of the system. However, there are cases where even a single component requires complicated proofs. In these cases, simplifications that are tied to the modular structure of the system do not apply.

In this paper we investigate a complementary approach for simplifying safety proofs which, rather than dividing the system being verified, divides the proof itself into simpler sub-proofs. We do so by introducing \emph{incremental proofs} that combine \emph{forward} and  \emph{backward} reasoning and a suitable form of \emph{prophecy} variables.

Our starting point
is the well-known observation that safety can be established by \emph{incremental induction},
where in each step an invariant is proved by induction while assuming previously proved invariants. Similarly to safe inductive invariants, all the invariants proven in such an incremental proof hold on all reachable states. Our first observation is that we can allow incremental proofs that combine forward and backward steps, where a backward step uses an invariant that holds in all states that are backward reachable from the bad states.
Therefore, the combination of forward and backward steps allows the proof to use predicates that do not hold on either the forward reachable or the backward reachable states, but maintains the fact that the predicates hold on all paths from an initial state to a bad state.
We observe that this combination can reduce the complexity (e.g., Boolean structure) of predicates used in proofs %
when compared to safe inductive invariants. 

Next, we observe that we can sometimes simplify the kind of quantification needed in forward or backward invariants by introducing prophecy witnesses for some
existentially
quantified property, and using the witnesses in subsequent proof steps instead of
quantified variables.
Our notion of prophecy is similar to prophecy variables used in prior work, e.g.~\cite{prophic3,prophecy-made-simple,VickMcMillan}, but our development includes two key novelties.
First, our notion of prophecy admits a sound and complete characterization in terms of an auxiliary safety problem: checking whether a certain prophecy witness can be added amounts to checking the safety of an auxiliary transition system.
Second, our notion of prophecy combines particularly well with forward and backward proofs.
Prophecy may not only decrease the number of quantifiers needed
in proofs,
but also the quantification depth and possibly eliminate quantifier alternations.

Based on these observations we define a proof system for incremental safety proofs that allows to combine forward-backward reasoning and prophecy in a synergetic way.
This is enabled by our characterization of the soundness of adding prophecy witnesses as a safety problem.
For example, it is possible to use forward-backward proof steps for identifying sound prophecy witnesses, and then use these witnesses in subsequent  proof steps. This kind of interaction between the 
proof steps 
is key to simplifying the formulas used in proofs. 

We demonstrate the ability of incremental proofs to simplify safety proofs both theoretically and via a case study.
We prove that if we restrict the set of formulas that can be used as inductive invariants in proofs, then incremental proofs have an increased proof power compared to safe inductive invariants. %
That is, there are cases where no safe inductive invariant exists within the chosen set of formulas but an incremental proof does exist. 
Moreover, we show that each proof rule strictly increases the overall proof power of the proof system.
Reducing the complexity of formulas used in proofs
may be beneficial for proof automation.

As a case study we consider the Paxos protocol for consensus~\cite{paxos,paxos-made-simple}, several of its variants, and Raft~\cite{raft}. Safety verification of
these protocols has been extensively investigated in recent years due to their importance in distributed systems and their intricate safety proofs that require a non-clausal Boolean structure and quantifier alternations~\cite{ironfleet,weaken,pfolic3,DuoAI,Swiss,DBLP:journals/pacmpl/PadonLSS17,DBLP:conf/osdi/ZhangHKCP24,DBLP:conf/osdi/ZhangSCKP25}.
To expand our case study, for each variant of Paxos we consider
several versions differing in how the protocol and its safety property are modeled. We show that in some cases we can significantly simplify the safety proofs of these protocols using our proof system, for example reducing a proof that required quantifier alternations and non-clausal predicates to clauses that are purely universally quantified.
The case study also demonstrates that forward-backward reasoning can aid in finding useful prophecy witnesses that might be difficult to find using forward reasoning only. It therefore illustrates the significance of the synergetic combination of forward-backward and prophecy provided by our proof system. To the best of our knowledge, this kind of interaction between forward-backward reasoning and prophecy is novel. 

In summary, this paper makes the following contributions:
\begin{itemize}
    \item We introduce a proof system for incremental safety proofs that combines forward reasoning, backward reasoning and introduction of prophecy in a synergetic way.
    
    \item We provide a characterization of the soundness of our notion of prophecy witnesses using an auxiliary safety problem.
    
    \item We relate incremental safety proofs to the standard notion of safe inductive invariants via a translation of incremental proofs to inductive invariants. 
    
    \item We investigate the proof power of incremental proofs and show that they may sometimes eliminate Boolean connectives as well as quantifiers, including nested and alternating quantifiers, that are required in safe inductive invariants. The interaction of forward-backward reasoning and prophecy is key to achieving such simplifications.
    
    \item We present a case study where we examine 
    the Paxos consensus protocol, several of its variants, and Raft,
    and demonstrate how incremental proofs simplify their safety proofs.  
\end{itemize}

The rest of the paper is organized as follows:
\Cref{sec:overview} presents an overview of our approach for simplifying proofs,
\Cref{sec:prelim} includes preliminary definitions and notation, 
\Cref{sec:incremental-proofs} describes our proof system for incremental forward-backward safety proofs,
\Cref{sec:proph} extends the proof system with a proof rule for introducing prophecy witnesses,
and \Cref{sec:case-study} details the case study demonstrating proof simplification for Paxos variants and Raft.
Finally, \Cref{sec:related} discusses related work and concludes the paper.

%% file: 2-overview.tex
\section{Overview}
\label{sec:overview}

In this section we illustrate and motivate the different proof rules comprising our proof system for incremental safety proofs. We do so by presenting several examples, where incremental proofs allow us to use  invariants that are simpler than the safe inductive invariants required for establishing safety of the original problem. We demonstrate simplifications both in the Boolean structure of invariants and in their quantification.

\subsection{Simplifying the Boolean Structure of Invariants}
\label{sec:overview-boolean}

Our first motivating example considers a simple token-passing protocol between a \emph{dealer} and two \emph{players}, modeled with propositional variables: %
$a, d, p_1, p_2$. When $a$ is true we say that the protocol is \emph{active}, and this property of the protocol never changes; $d$, $p_1$, and $p_2$ represent whether the dealer and the players each hold a token, respectively.

The protocol starts with the second player not holding a token, thus the initial condition is:
\[ \iota  = \neg p_2. \]
Then, in each step, one of three transitions can take place, each of which is captured by a formula that relates the values of the variables $a, d, p_1, p_2$ before the transition to their values after the transition, encoded by $a', d', p_1', p_2'$. If the protocol is active, a token can be passed from the dealer to the first player:
\[ \tau_1 = a \wedge a' \wedge d\wedge \neg d' \wedge p_1' \wedge (p_2' \leftrightarrow p_2) \]
Moreover, the first player can always pass a token to the second:
\[ \tau_2 = (a' \leftrightarrow a) \wedge 
(d' \leftrightarrow d) \wedge p_1 \wedge \neg p_1' \wedge p_2' \]
Lastly, if the protocol is inactive, the dealer can gain tokens arbitrarily:
\[ \tau_3 = \neg a \wedge \neg a' \wedge d' \wedge (p_1' \leftrightarrow p_1) \wedge (p_2' \leftrightarrow p_2). \]
All in all, the transition formula for this protocol is given by $\tau = \tau_1\vee \tau_2\vee \tau_3$.

Say we would like to verify that it is impossible for the dealer and both players to all hold tokens simultaneously. That is, we would like to prove that the following bad states are unreachable:
\[\beta = d\wedge p_1 \wedge p_2. \]
Unfortunately, the property $\neg \beta \equiv \neg d \vee \neg p_1 \vee \neg p_2$ is not preserved by $\tau_3$, and is therefore not inductive. In fact, there is no clause that can serve as a safe inductive invariant, since any clause that implies $\neg \beta$ must be a subset of the literals $\{\neg d, \neg p_1, \neg p_2\}$, and it is easy to check that none produce an inductive invariant. For a similar reason, there is no negation of a clause, or conjunction of literals, that is a safe inductive invariant for this safety problem (such a conjunction would have to be exactly $\neg p_2$ to be implied by $\iota$, which is not inductive).

Thus, when looking for a safe inductive invariant for this simple protocol, either manually or automatically, we need to consider candidates containing both disjunctions and conjunctions.
One such safe inductive invariant for this problem is as follows:
\[ \varphi = (a \wedge \neg d) \vee \neg p_1 \vee \neg p_2, \]
which can be read as ``either the protocol is active and the dealer does not hold a token, or one of the two players does not hold a token''. This property holds on all initial states and is preserved by the transitions: $\tau_1$ precisely results in a state where the protocol is active and the dealer has no token, $\tau_2$ results in a state where the first player does not hold a token, and $\tau_3$ considers an inactive protocol and does not change $p_1$ and $p_2$, which means that $\neg p_1 \vee \neg p_2$ must hold both before and after the transition. It also holds that $\varphi \Rightarrow \neg \beta$, which makes the inductive invariant safe.

When rewritten in conjunctive normal form, $\varphi$ is equivalent to the conjunction of two clauses: $\varphi_1 = a \vee \neg p_1 \vee \neg p_2$ and $\varphi_2 = \neg d \vee \neg p_1 \vee \neg p_2$. We observe that it is sometimes possible to prove conjunctive invariants such as $\varphi$ \emph{incrementally}, by proving
simpler auxiliary inductive invariants until those, when taken together, manage to prove safety. This can be done by iteratively verifying a sequence of inductive invariants $\varphi_1,\ldots,\varphi_n$, where for every $i=1,...,n$ we add $\bigwedge_{j<i} \varphi_j$ as additional axioms of the safety problem when verifying $\varphi_i$. Namely, we strengthen the initial states to
$\iota\wedge\bigwedge_{j<i} \varphi_j$
and the transitions to
$\bigwedge_{j<i} \varphi_j \wedge\tau \wedge \bigwedge_{j<i} \varphi'_j$.
If, in addition, $\bigwedge_{i\leq n}\varphi_i$ implies safety, then $\bigwedge_{i\leq n}\varphi_i$ is a safe inductive invariant of the original safety problem.

For our simple example such a proof can be constructed from $\varphi_1$ and $\varphi_2$, since $\varphi_1$ is inductive by itself, and $\varphi_2$ is inductive under the assumption that $\varphi_1$ holds on every pre-state and post-state of a transitions, i.e.,
$\varphi_1\wedge \varphi_2\wedge\tau\wedge\varphi_1' \Rightarrow \varphi'_2$. Since $\varphi_1\wedge \varphi_2$ implies $\neg\beta$ (and is in fact equivalent to $\varphi$), this forms an incremental forward proof of safety, where in each step we added a clause of size 3, whereas no safe inductive invariant of the problem can be clausal in form.

However, we can construct a different kind of incremental proof that uses even simpler auxiliary invariants. This is done by observing that instead of a proof method that accumulates inductive invariants, which are properties of reachable states, we can accumulate properties of states that appear on error traces, i.e., on traces from an initial state to a bad state
(even when such traces are hypothetical and do not exist).
This allows us to prove safety ``by contradiction'':
if the accumulated properties imply $\neg \beta$ or $\neg \iota$, we deduce that there are no error traces and the safety problem is safe.

To that end, in each step of the incremental proof we allow introducing either an inductive invariant, or a \emph{backward-inductive invariant}, i.e., a formula $\varphi$ such that $\beta \Rightarrow \varphi$ and $\varphi' \wedge \tau \Rightarrow \varphi$. Such a formula holds on all backward-reachable states from $\beta$, and is therefore an invariant of all error traces of the safety problem.
Intuitively, this allows us to use simpler inductive invariants in incremental proofs, because in subsequent proof steps we can rely on auxiliary invariants that might not be true of reachable states, whereas in incremental forward proofs all auxiliary invariants were actual (forward) invariants of the problem.

For the simple dealer-player protocol we consider, we can use a backward-inductive invariant $\varphi_3 = \neg a \vee d$, since it holds on all bad states and is preserved by backward transitions (in bad states the dealer holds a token, and can lose a token in a backward transition only if the protocol is inactive). Then, using this assumption as an axiom for a forward proof step, the transition $\tau_1$ is disabled (all its post-states violate $\varphi_3$), and therefore $\varphi_4 = \neg p_1 \vee \neg p_2$ is a (forward) inductive invariant. Thus, in this incremental forward-backward proof we can conclude that all states on traces from initial to bad states satisfy $\varphi_3 \wedge\varphi_4 = (\neg a \vee d) \wedge (\neg p_1 \vee \neg p_2)$, which implies $\neg \beta = \neg (d\wedge p_1 \wedge p_2)$. In this way we have shown the protocol to be safe using the auxiliary invariants $\neg a \vee d$ and $\neg p_1\vee \neg p_2$, compared to $a \vee \neg p_1 \vee \neg p_2$ and $\neg d \vee \neg p_1 \vee \neg p_2$ in the incremental forward proof. This may seem like a modest simplification in the auxiliary invariants one needs to consider, but this stems from the very simple nature of the protocol. As we shall see in the sequel, the presence of quantification can significantly transform the power of forward-backward proofs compared to forward ones.

Note that $\varphi_3 \wedge \varphi_4$ is not an inductive invariant of the original protocol: it does not hold in the initial states and it is not preserved by $\tau_1$. It is only guaranteed to hold in all the states along traces from an initial state to a bad state. However, the incremental proof does induce (another) safe inductive invariant for the protocol.
In \Cref{sec:incremental-forward-backward-proofs} we show that any incremental forward-backward proof can be converted into a safe inductive invariant of the problem which consists of a Boolean combination of the predicates used in the proof (\Cref{thm:FB-invariant}). For the proof using the backward invariant $\varphi_3 = \neg a \vee d$ and the forward invariant $\varphi_4 = \neg p_1 \vee \neg p_2$, this results in $\neg \varphi_3 \vee \varphi_4 = \neg (\neg a \vee d) \vee (\neg p_1 \vee \neg p_2)$, which is equivalent to the safe inductive invariant $\varphi = (a \wedge \neg d) \vee (\neg p_1 \vee \neg p_2)$ presented before.

\subsection{Forward-Backward Proofs and Quantification}
\label{sec:overview-quant}

We introduce quantification and first-order semantics into our motivating protocol by %
replacing the propositional variables with unary relations $a(\cdot), d(\cdot), p_1(\cdot), p_2(\cdot)$. 
Instead of a single dealer and two players, the protocol works with \emph{teams}, each of which is composed of a dealer and two players, where each of $d(t)$, $p_1(t)$, and $p_2(t)$ is true if the dealer, first player, or second player of team $t$ holds a token, respectively.
Similarly, a team $t$ is active if $a(t)$ holds. Initially, all second players hold no tokens:
\[ \iota  = \forall x. \neg p_2(x). \]
Similarly to before, the first transition allows a dealer from an active team to pass a token to a first player, but not necessarily from his own team. For readability,
we use TLA+ style notation where we specify transitions as formulas $[\varphi]_S$ where $S$ denotes the modified set of values (i.e., anything outside $S$ is not modified by the transition):
\[ \tau_1 = \exists x, y. [a(x) \wedge d(x) \wedge \neg d'(x) \wedge p_1'(y)]_{\{d(x),p_1(y)\}}. \]
The second transition lets a first player of a team pass a token to the second player of the team:
\[ \tau_2 = \exists x. [p_1(x) \wedge \neg p_1'(x) \wedge p_2'(x)]_{\{p_1(x),p_2(x)\}}. \]
And the last transition allows a dealer of an inactive team to gain a token:
\[ \tau_3 = \exists x. [\neg a(x) \wedge d'(x)]_{\{d(x)\}} \]
The bad states of this updated protocol are the states where all dealers hold a token, and there is a team where both players hold a token as well:
\[ \beta = \forall x. d(x) \wedge \exists y. p_1(y) \wedge p_2(y). \]

As in \Cref{sec:overview-boolean}, $\neg\beta$ is not preserved by $\tau_3$ and is therefore not inductive. One safe inductive invariant for the problem is
\[ \varphi = (\exists x. a(x) \wedge \neg d(x)) \vee (\forall x. \neg p_1(x) \vee \neg p_2(x)). \]
Intuitively, this inductive invariant says that ``Either there is an active team where the dealer has no token, or in every team there is a player that does not hold a token''. Recall that in \Cref{sec:overview-boolean}, we had no clausal safe inductive invariant, but could find an incremental forward proof that used only clausal auxiliary invariants, namely $a \vee \neg p_1 \vee \neg p_2$ and $\neg d \vee \neg p_1 \vee \neg p_2$. An analogous proof, however, seems impossible here, since the invariant cannot even be decomposed into conjuncts in a similar way. A similar forward-backward proof, however, is possible: we can first take $\varphi_1 = \forall x. \neg a(x) \vee d(x)$, which is backward-inductive. Using it as an axiom once again disables $\tau_1$, which makes $\varphi_2 = \forall x. \neg p_1(x) \vee \neg p_2(x)$ inductive in the forward direction. Taken together, $\varphi_1 \wedge\varphi_2 \Rightarrow\neg \beta$, which concludes the incremental forward-backward safety proof.

Converting this forward-backward proof into a safe inductive invariant in the way described in \Cref{sec:incremental-forward-backward-proofs} results in the invariant $\neg \varphi_1 \vee \varphi_2 = \neg (\forall x. \neg a(x) \vee d(x)) \vee (\forall x. \neg p_1(x) \vee \neg p_2(x))$, which is once again equivalent to $\varphi$. The incremental proof itself only needed to consider, and to perform inductiveness checks, over universally quantified clauses. On the other hand, this problem does not have any safe inductive invariant, or an incremental proof that is purely forward or purely backward, that consists of universally quantified clauses.
This claim can be proved by the \updr algorithm~\cite{updr} which is able to provide a proof that no universally quantified safe inductive invariant exists for this problem (and its time-reversal), let alone a clausal one. Indeed, we use the implementation of \updr in \texttt{mypyvy}~\cite{mypyvy} to obtain such a proof.

\subsection{Using Prophecy to Reduce Quantifier Nesting and Alternations}
\label{sec:overview-proph}

In the example of \Cref{sec:overview-quant}, we showed that an incremental forward-backward proof reduces the number of quantifiers needed in auxiliary invariants compared to those in a safe inductive invariant or a forward incremental proof. However, because a forward-backward proof can always be converted into a safe inductive invariant which is a Boolean combination of its auxiliary invariants (\Cref{thm:FB-invariant}), forward-backward proofs on their own cannot reduce quantifier nesting or alternations.
To that end, we propose using a notion of prophecy defined in \Cref{sec:proph} which replaces quantified variables with explicit witnesses.

For an example where quantifier alternations are needed in the inductive invariant, and are eliminated by the use of prophecy, we use a slightly simplified protocol compared to \Cref{sec:overview-boolean,sec:overview-quant}, which nonetheless demonstrates how we gain quantifier nesting and alternations from using prophecy in incremental forward-backward proofs. This protocol still consists of an \emph{active} relation $a(\cdot)$, as well as a dealer relation, which is now binary, $d(\cdot,\cdot)$. As before, the \emph{active} relation is unchanging and denotes whether a dealer of a certain team can pass tokens. Now, however, the dealer of each team can hold a token for every team, i.e., $d(t_1, t_2)$ indicates that the dealer of $t_1$ still holds the token associated with $t_2$. The players are no longer part of the modeled protocol.

Initially there is at least one inactive team, and dealers hold all tokens:
\[ \iota = \exists x. \neg a(x) \wedge \forall x, y. d(x, y). \]
There is only one transition, where a dealer of an active team can lose any of its tokens (by passing it to the players of the appropriate team, which are not modeled):
\[ \tau = \exists x, y. [a(x) \wedge \neg d'(x, y)]_{\{d(x,y)\}}. \]
Lastly, the bad states are those where all dealers have lost at least one token:
\[ \beta = \forall x. \exists y. \neg d(x, y). \]
Intuitively, this state cannot be reached because all reachable states contain an inactive team whose dealer always holds all of their tokens.
However, like before, $\neg \beta$ is not preserved by $\tau$ and is therefore not inductive, but
we can write a safe inductive invariant for this problem that precisely captures the inactive team:
\[ \varphi = \exists x. \neg a(x)\wedge \forall y. d(x, y). \]
This invariant contains quantifier nesting and alternations.
Since a safe inductive invariant for this problem would sensibly require quantifier nesting, we cannot have a forward-backward proof using auxiliary invariants with a single quantifier as before, since such a proof can be converted into nesting-free safe inductive invariant (\Cref{thm:FB-invariant}).
Additionally, once again we obtain a proof that this problem has no universal forward proof or a universal backward proof via the \updr algorithm~\cite{updr} implemented in \texttt{mypyvy}~\cite{mypyvy}, clausal or otherwise.

To allow incremental proofs that avoid quantifier nesting and alternations, as well as simplify the Boolean structure of the required predicates, we propose using \emph{prophecy}. The idea is to allow a proof step that adds a fresh symbol $w$ to the safety problem, along with an axiom $\varphi$ that prophesies some property of $w$ along every trace. This is a generalization of the previous proof steps, which added axioms that did not mention any new symbols. Moreover, similarly to how in a forward-backward proof step we required $\varphi$ to be an invariant of error traces (to ensure soundness), with the prophecy proof rule we require that adding $\varphi$ as an axiom is guaranteed to preserve all error traces of the problem. That is, a prophecy step $\varphi$ is \emph{sound} for a safety problem $\Pi = (\iota,\tau,\beta)$ if every error trace of $\Pi$ can be transformed (by assigning a certain value to $w$) into an error trace of the problem with the additional axiom, i.e., the safety problem with initial states $\iota\wedge \varphi(w)$, transitions $\varphi(w)\wedge\tau\wedge(w' = w)\wedge (\varphi(w))'$, and bad states $\beta\wedge \varphi(w)$.

In the above example, any forward trace starting from $\iota$ must have some team $x$ such that $\neg a(x)$, since this holds in all initial states and $a(\cdot)$ remains unchanged.
Formally, this can be verified by checking that $\iota\Rightarrow \exists x. \neg a(x)$, and that $\neg a(x)$, which is a formula with a free variable $x$, is forward-preserved, i.e., $\neg a(x) \wedge \tau \Rightarrow \neg a'(x)$.
Since these verification conditions are valid, on forward traces from $\iota$ it is sound to assume that there exists some witness $w$ which satisfies $\neg a(x)$. Next, we can add $w$ as an unchanging witness to the safety problem along with the axiom $\varphi_1 = \neg a(w)$, resulting in an equivalently safe safety problem.

Under this prophecy assumption, the transition $\tau$ becomes disabled for $x=w$, and so the formula $\varphi_2 = \exists y. \neg d(w, y)$ is backward inductive. Taken together $\varphi_1\wedge\varphi_2 = \neg a(w) \wedge \exists y. \neg d(w, y)$ implies $\neg \iota$, which means that the problem is safe. Note that this proof used only existentially quantified auxiliary invariants, with a single quantified variable and no Boolean connectives. However, we proved a property whose safe inductive invariant contained quantifier nesting and alternations.

In \Cref{sec:proph} we generalize this proof method by precisely defining the notion of a \emph{sound prophecy}
used in this paper
and showing that the soundness of a given prophecy can itself be stated as a safety problem to be proven, which fits neatly into our approach of incrementally proving various safety problems until we are able to prove the original one in question.
Moreover, simiarly to before, we describe how a forward-backward incremental proof with prophecy can be converted into a safe inductive invariant for the problem (\Cref{thm:FBP-invariant}). Like with forward-backward proofs, this conversion can create complex Boolean combinations of the auxiliary invariants used in the proof, but may also introduce quantifiers to eliminate the use of the witness symbols used in prophecy. For example, converting the above
proof
results in the safe inductive invariant $\varphi = \exists x. \neg a(x)\wedge \forall y. d(x, y)$, where in this simple example, the witness has been effectively existentially quantified.

%% file: 1-preliminaries.tex
\section{Preliminaries}
\label{sec:prelim}

In this section we provide the background necessary for our developments. 
We use first-order logic to model systems and their properties. System states are encoded using a first-order vocabulary $\Sigma$ which consists of constant symbols, function symbols and relation symbols. Given a first-order vocabulary $\Sigma$, a \emph{transition system} is a pair $(\iota, \tau)$, where $\iota$ is a closed formula over $\Sigma$ representing initial states of the system and $\tau$ is a closed formula over $\Sigma \uplus \Sigma'$ that represents the transitions of the system. $\Sigma' =\{a' \mid a \in \Sigma\}$ is a copy of the vocabulary used to represent the post-state of a transition. For a formula $\varphi$ over $\Sigma$ we denote by $\varphi'$ the formula obtained by substituting each symbol in $\Sigma$ with its counterpart in $\Sigma'$. We use $\tau^{-1}$ to denote the inverse of $\tau$ obtained by replacing each symbol from $\Sigma$ by its counterpart in $\Sigma'$ and vice versa.

A \emph{safety problem} is a triple $(\iota, \tau,\beta)$, where $(\iota,\tau)$ is a transition system and $\beta$ is a closed formula over the same vocabulary $\Sigma$ representing bad states. The safety problem $(\iota,\tau,\beta)$ is safe if there is no sequence of $\tau$-transitions (a trace of $(\iota,\tau)$) from an initial state to a bad state. 

A closed formula $\varphi$ over $\Sigma$ is an \emph{inductive invariant} for a transition system $(\iota,\tau)$ if $\iota\Rightarrow \varphi$ (initiation) and $\varphi \wedge \tau \Rightarrow \varphi'$  (consecution), which means that $\varphi$ holds in all the initial states and it is preserved by the transitions of the system. The formula $\varphi$ is a \emph{safe inductive invariant} for the safety problem $(\iota,\tau,\beta)$ if, in addition, $\varphi \Rightarrow \neg \beta$ (safety). An inductive invariant over-approximates the set of states that can be reached from the initial states of the system. Therefore, if it
is safe,
it implies that the reachable states are disjoint from the bad states, thus the safety problem is safe.

%% file: 3-incremental.tex
\section{Incremental Proofs}
\label{sec:incremental-proofs}

In this paper we present a family of proof systems for constructing incremental safety proofs.
\ifproofs\else
The soundness of each proof system follows from its sound inference rules (see~\cite{forward-backward-extended}). \fi
In this section we start with incremental  proofs that combine forward and backward reasoning.

\subsection{Incremental Forward Proofs}
\label{sec:incremental-forward-proofs}

The statements of the proof systems we present are safety problems. 
Each proof system is defined by a set of inference rules whose premises and conclusions are safety problems, possibly with additional side conditions.
Provable statements correspond to safety problems that are provably safe using the inference rules of the system.
A proof of a safety problem $\Pi$ is then a tree where each node is a safety problem, the root is $\Pi$ and each node is obtained by applying an inference rule.

Since we are interested in analyzing the proof power of each system,
we sometimes parameterize a proof system by the class of 
predicates $\predicates$
used as inductive
invariants in proof rules.
We compare how expressive that class has to be to prove the same safety problems in different proof systems.

The most basic proof component of the incremental proofs we consider is proving an invariant by induction, i.e., by showing that all initial states satisfy it, and that all transitions preserve it.
Since not all invariants of a transition system are inductive, to make such a system (relatively) complete, we also include an inference rule that allows strengthening the bad states (i.e., weakening the proven invariant), which preserves safety.

\begin{definition}
    \label{def:fwd-ind-system}
    The \emph{forward proof system} $\Fsys$ consists of the following proof rules:
    \begin{center}
        \AxiomC{}
        \LeftLabel{(Ind)}
        \RightLabel{$\iota\Rightarrow \varphi$, $\varphi\wedge\tau \Rightarrow \varphi'$}
        \UnaryInfC{$(\iota, \tau, \neg\varphi)$}
        \bottomAlignProof
        \DisplayProof
        \hspace{1.5em}
        \AxiomC{$(\iota, \tau, \neg \varphi)$}
        \LeftLabel{(Cons)}
        \RightLabel{$\varphi\Rightarrow \neg \beta$}
        \UnaryInfC{$(\iota, \tau, \beta)$}
        \bottomAlignProof
        \DisplayProof
    \end{center}
    For a class of predicates $\predicates$, we denote by $\Fsys^\predicates$ the system $\Fsys$ where applications of (Ind) are restricted to $\varphi\in\predicates$.
\end{definition}

\begin{theorem}[Soundness]
    \label{thm:F-sound}
    Every safety problem that has an $\Fsys$-proof is safe.
\end{theorem}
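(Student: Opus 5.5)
We argue by structural induction on the $\Fsys$-proof tree, showing that every safety problem appearing as a node of the tree is safe. Since the tree is finite and every node is obtained by one of the two rules, it suffices to check that each rule is sound: if its premises are safe and its side conditions hold, then its conclusion is safe. The root is $\Pi$, so $\Pi$ is then safe.

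For (Ind), there are no premises, and the conclusion is $(\iota,\tau,\neg\varphi)$ under the side conditions $\iota\Rightarrow\varphi$ and $\varphi\wedge\tau\Rightarrow\varphi'$. We fix an arbitrary trace $s_0,s_1,\dots,s_n$ of $(\iota,\tau)$, so that $s_0\models\iota$ and each pair $(s_i,s_{i+1})$ satisfies $\tau$, reading $s_i$ as the interpretation of $\Sigma$ and $s_{i+1}$ as that of $\Sigma'$. We then show by induction on $i$ that $s_i\models\varphi$. The base case follows from initiation. For the step, $s_i\models\varphi$ together with $(s_i,s_{i+1})\models\tau$ gives $(s_i,s_{i+1})\models\varphi'$ by consecution, i.e.\ $s_{i+1}\models\varphi$. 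Thus no state on the trace satisfies $\neg\varphi$, and the problem is safe. This is exactly the standard argument that an inductive invariant over-approximates the reachable states, as recalled in \Cref{sec:prelim}.

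For (Cons), the premise $(\iota,\tau,\neg\varphi)$ is assumed safe, and the side condition is $\varphi\Rightarrow\neg\beta$, equivalently $\beta\Rightarrow\neg\varphi$. Suppose a trace of $(\iota,\tau)$ reached a state satisfying $\beta$. That state would then satisfy $\neg\varphi$, giving an error trace of the premise and a contradiction. The conclusion $(\iota,\tau,\beta)$ is therefore safe, since strengthening the bad states preserves safety.

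No step here is a real obstacle. The only care needed is in the (Ind) case, where we must make precise the correspondence between a pair of consecutive states and a single structure over $\Sigma\uplus\Sigma'$. In particular, the states of a trace should share a domain, as is standard for first-order transition systems, so that the pair forms one structure. With that in place, the proof is a routine double induction: on the proof tree, and on trace length inside (Ind).
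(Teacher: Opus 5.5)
Your proof is correct and follows the same approach as the paper, which simply states that every inference rule of $\Fsys$ is sound. You have spelled out the routine soundness arguments for (Ind) and (Cons), and the induction over the proof tree, that the paper leaves implicit.
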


\ifproofs
\begin{proof}
    All inference rules of $\Fsys$ are sound.
\end{proof}
\fi

Note that
all $\Fsys$-proofs can be collapsed into the following form, by collapsing consecutive consequence rules:
\begin{prooftree}
    \AxiomC{}
    \LeftLabel{(Ind)}
    \RightLabel{$\iota\Rightarrow \varphi$, $\varphi\wedge\tau \Rightarrow \varphi'$}
    \UnaryInfC{$(\iota, \tau, \neg \varphi)$}
    \LeftLabel{(Cons)}
    \RightLabel{$\varphi \Rightarrow \neg \beta$}
    \UnaryInfC{$(\iota, \tau, \beta)$}
\end{prooftree}

Thus the $\Fsys$ proof system has no meaningful way to simplify proofs by dividing them into multiple applications of the induction rule. 
To allow such simplifications we introduce incremental proofs. Incremental proofs may use an additional proof rule that allows adding previously proved invariants as axioms of the safety problem. That is, in order to prove a safety problem $\Pi = (\iota, \tau, \beta)$ safe, the incremental proof rule requires proving that
\begin{inparaenum}
    \item a predicate $\varphi$ is an invariant with respect to $\iota$ and $\tau$, and
    \item the problem $\Pi$, when restricted to states that satisfy $\varphi$, is safe.
\end{inparaenum}

\begin{definition}
    \label{def:fwd-inc-ind-system}
    The \emph{incremental forward proof system} $\FIsys$ consists of the proof rules of $\Fsys$ (\Cref{def:fwd-ind-system}), and the additional incremental proof rule:
    \begin{center}
        \AxiomC{$(\iota, \tau, \neg\varphi)$}
        \AxiomC{$(\iota\wedge\varphi, \tau\wedge\varphi\wedge\varphi', \beta\wedge\varphi)$}
        \LeftLabel{(Inc)}
        \BinaryInfC{$(\iota, \tau, \beta)$}
        \bottomAlignProof
        \DisplayProof
    \end{center}
    For a class of predicates $\predicates$, we denote by $\FIsys^\predicates$ the system $\FIsys$ where applications of (Ind) %
    are restricted to  $\varphi\in\predicates$,
    and by $\FIsys^\predicates_n$ the system where proofs use at most $n$ applications of (Ind). 
\end{definition}

\begin{theorem}[Soundness]
    \label{thm:FI-sound}
    Every safety problem that has an $\FIsys$-proof is safe.
\end{theorem}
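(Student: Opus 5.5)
The plan is to prove soundness of each inference rule of $\FIsys$ separately and then conclude by induction on the height of the proof tree. We show that whenever all premises of a rule are safe (and its side conditions hold), its conclusion is safe. The rules (Ind) and (Cons) are exactly those of $\Fsys$, and their soundness was already used for \Cref{thm:F-sound}.

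For (Ind), the side conditions say $\varphi$ is an inductive invariant of $(\iota,\tau)$. A straightforward induction on trace length shows every state on a trace from $\iota$ satisfies $\varphi$, so no trace reaches $\neg\varphi$. For (Cons), any trace reaching $\beta$ reaches $\neg\varphi$, because $\varphi\Rightarrow\neg\beta$ means $\beta\Rightarrow\neg\varphi$. Hence safety of $(\iota,\tau,\neg\varphi)$ gives safety of $(\iota,\tau,\beta)$.

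The only new case is (Inc). Suppose both premises are safe, and suppose for contradiction that $(\iota,\tau,\beta)$ has an error trace $s_0,\dots,s_k$ with $s_0\models\iota$, each $(s_i,s_{i+1})\models\tau$, and $s_k\models\beta$. Safety of the left premise $(\iota,\tau,\neg\varphi)$ means no state reachable from $\iota$ via $\tau$ satisfies $\neg\varphi$. Every prefix $s_0,\dots,s_i$ is a trace of $(\iota,\tau)$, so each $s_i\models\varphi$. Consequently:
\begin{itemize}
\item $s_0\models\iota\wedge\varphi$;
\item each pair $(s_i,s_{i+1})$ satisfies $\tau\wedge\varphi\wedge\varphi'$, since $s_i\models\varphi$ and $s_{i+1}\models\varphi$;
\item $s_k\models\beta\wedge\varphi$.
\end{itemize}
So the same sequence is an error trace of the right premise $(\iota\wedge\varphi,\tau\wedge\varphi\wedge\varphi',\beta\wedge\varphi)$, contradicting its safety.

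The main point requiring care is this trace-transfer argument for (Inc). It relies on the left premise giving the invariant $\varphi$ on \emph{every} state of the trace, not just the final one; this holds because every prefix of a trace is itself a trace, so no extra inductive argument is needed. With all three rules sound, an induction on the proof tree (leaves are (Ind) applications, whose conclusions are safe outright) shows the root safety problem is safe.
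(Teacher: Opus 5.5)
Your proposal is correct and matches the paper's proof, which simply states that every inference rule of $\FIsys$ is sound. You fill in exactly those rule-by-rule checks, including the trace-transfer argument for (Inc), and then induct on the proof tree.
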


\ifproofs
\begin{proof}
    All inference rules of $\FIsys$ are sound.
\end{proof}
\fi

Like the rest of the proof systems presented in this paper, an incremental $\FIsys$-proof of a safety problem can always be converted into a safe inductive invariant for the same problem.
This is another way to prove soundness theorems like \Cref{thm:F-sound,thm:FI-sound}, but can also be used to reason about how expressive inductive invariants need to be to match the proof power of a given proof system. In the case of incremental forward proofs $\FIsys^\predicates$, each proof can be converted into a safe inductive invariant in the form of a conjunction over $\predicates$, where the number of conjuncts precisely corresponds to the number of (Ind) rules used in the proof. This conversion is given below.

\begin{definition}
    \label{def:F-invariant}
    For every proof $P\in \FIsys$ define $\FInv(P)$ by induction on proof structure:
    \begin{itemize}
        \item if the root of $P$ is a conclusion of (Ind), denoted $(\iota,\tau,\neg\varphi)$, then $\FInv(P)= \varphi$;
        \item if the root of $P$ is a conclusion of (Cons) with a premise whose proof is $\Tilde{P}$, then
        \[\FInv(P)= \FInv(\Tilde{P});\]
        \item if the root of $P$ is a conclusion of (Inc) with premises whose proofs are $P_1$ and $P_2$, then
        \[ \FInv(P)= \FInv(P_1)\wedge \FInv(P_2). \]
    \end{itemize}
\end{definition}

\begin{theorem}
    \label{thm:F-invariant}
    Let $P\in \FIsys^\predicates$ be an incremental forward proof of a safety problem $\Pi$. Then $\FInv(P)$ is a safe inductive invariant of $\Pi$, and if $P$ contains $n\in \nats$ instances of the (Ind) rule, then $\FInv(P)$ is a conjunction of $n$ predicates from $\predicates$.
\end{theorem}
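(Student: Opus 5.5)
We argue by structural induction on the proof $P$, proving the strengthened claim that for every subproof $P$ whose root is a safety problem $(\iota,\tau,\beta)$, $\FInv(P)$ is a safe inductive invariant of $(\iota,\tau,\beta)$ and is a conjunction of exactly as many predicates from $\predicates$ as there are (Ind) instances in $P$. The counting part is immediate from \Cref{def:F-invariant}: (Ind) contributes one predicate $\varphi\in\predicates$, (Cons) leaves the invariant unchanged, and (Inc) conjoins the invariants of its two subproofs, so the (Ind) counts add. All the work is therefore in the invariant part.

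The base case (Ind) holds by its side conditions. With $\FInv(P)=\varphi$, we have $\iota\Rightarrow\varphi$ and $\varphi\wedge\tau\Rightarrow\varphi'$, and safety against $\neg\varphi$ is trivial. For (Cons), the induction hypothesis says $\psi=\FInv(\Tilde P)$ is an inductive invariant of $(\iota,\tau)$ with $\psi\Rightarrow\varphi$. The side condition $\varphi\Rightarrow\neg\beta$ then gives $\psi\Rightarrow\neg\beta$.

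The (Inc) case is the main step. Let $\psi_1=\FInv(P_1)$ and $\psi_2=\FInv(P_2)$. By the induction hypothesis, $\psi_1$ is inductive for $(\iota,\tau)$ and implies $\varphi$. Also, $\psi_2$ is a safe inductive invariant of $(\iota\wedge\varphi,\ \tau\wedge\varphi\wedge\varphi',\ \beta\wedge\varphi)$. We check the three conditions for $\psi_1\wedge\psi_2$ in turn.

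\emph{Initiation.} We have $\iota\Rightarrow\psi_1$, hence $\iota\Rightarrow\varphi$, hence $\iota\wedge\varphi$ holds, which gives $\psi_2$.

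\emph{Consecution.} Assume $\psi_1\wedge\psi_2\wedge\tau$. Then $\psi_1'$ holds by inductiveness of $\psi_1$. Both $\varphi$ and $\varphi'$ follow from $\psi_1$ and $\psi_1'$. So the strengthened transition $\tau\wedge\varphi\wedge\varphi'$ fires, and inductiveness of $\psi_2$ yields $\psi_2'$.

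\emph{Safety.} From $\psi_2\Rightarrow\neg(\beta\wedge\varphi)$ and $\psi_1\Rightarrow\varphi$ we get $\psi_1\wedge\psi_2\Rightarrow\neg\beta$.

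The only delicate point is that consecution of $\psi_2$ is known only relative to the strengthened transition relation. The key observation is that $\psi_1$ supplies $\varphi$ in both the pre-state and the post-state, so the strengthened relation applies. The statement for $\Pi$ is then the claim at the root.
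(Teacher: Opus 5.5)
Your proposal is correct and matches the paper's proof essentially step for step. Both use structural induction on $P$, read the count directly off the definition of $\FInv$, and handle the (Inc) case by checking initiation, consecution and safety for $\FInv(P_1)\wedge\FInv(P_2)$, using $\FInv(P_1)\Rightarrow\varphi$ (and its primed version) to enable the strengthened transition relation $\tau\wedge\varphi\wedge\varphi'$.
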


\Cref{thm:F-invariant}, whose proof is provided in
\ifarxiv
\Cref{sec:appendix}%
\else
the extended version of the paper~\cite{forward-backward-extended}%
\fi, ensures that any proof in $\FIsys^\predicates_n$ can be converted into a safe inductive invariant
in $\wedge_n \predicates$
for the same safety problem, where $\wedge_n \predicates$ denotes the class of predicates that includes conjunctions of at most $n$ predicates from $\predicates$.
This hints that incremental proofs may avoid conjunctions that are needed in a safe inductive invariant. In fact, there exist safety problems whose safe inductive invariant requires $n$ conjuncts that can be proven incrementally. These problems are provable in $\FIsys^\predicates_n$ but not in $\FIsys^\predicates_{n-1}$ (which subsumes $\Fsys^\predicates$). 
These results about the proof power of $\FIsys$ are summarized in \Cref{thm:FI-power}.

\begin{theorem}
    \label{thm:FI-power}
    For every $\predicates$, all  safety problems that are provable in $\FIsys_n^\predicates$ are also provable in $\Fsys^{\wedge_n\predicates}$. 
    However, for every $n>0$ there exist a set of predicates $\predicates$ and a safety problem $\Pi$ such that $\Pi$ is provable in $\FIsys_n^\predicates$ but not provable in $\FIsys_{n-1}^\predicates$.
\end{theorem}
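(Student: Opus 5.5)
The first part follows directly from \Cref{thm:F-invariant}. Let $P\in\FIsys_n^\predicates$ be a proof of $\Pi=(\iota,\tau,\beta)$. Then $\FInv(P)$ is a safe inductive invariant of $\Pi$, and it is a conjunction of at most $n$ predicates from $\predicates$, so $\FInv(P)\in\wedge_n\predicates$. We therefore build an $\Fsys^{\wedge_n\predicates}$-proof in the collapsed form. Apply (Ind) with $\varphi=\FInv(P)$, whose side conditions $\iota\Rightarrow\varphi$ and $\varphi\wedge\tau\Rightarrow\varphi'$ hold by inductiveness. Then apply (Cons) with side condition $\varphi\Rightarrow\neg\beta$, which holds by safety of the invariant.

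For the separation, fix $n>0$ and use $n$ independent propositional variables $x_1,\dots,x_n$ (plus auxiliary ones as needed), so that each conjunct is inductive only relative to the previous ones. The candidate is
\[
\iota=\bigwedge_i x_i,\qquad
\tau=\bigvee_{i}\Bigl(\bigl(x'_i \leftrightarrow x_i\vee\neg x_{i-1}\bigr)\text{-style update with frame}\Bigr),\qquad
\beta=\bigvee_i\neg x_i,
\]
in which transition $\tau_i$ may falsify $x_i$ only when $x_{i-1}$ is already false; for $i=1$, $\tau_1$ has no enabling condition, so we instead make $x_1$ unchangeable. The natural invariant is then $x_1\wedge\dots\wedge x_n$. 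Take $\predicates=\{x_1,\dots,x_n\}$. Each $x_i$ is inductive relative to $x_1,\dots,x_{i-1}$, and nothing earlier is available for $x_1$, since $x_1$ is itself inductive. This gives an $\FIsys_n^\predicates$-proof: a chain of $n-1$ (Inc) steps, each left premise an (Ind) for $x_i$ (proved in the strengthened problem), and a final (Ind) on $x_n$ followed by (Cons).

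The design is not quite right yet. For the lower bound to hold, $x_i$ for $i\ge2$ must \emph{not} be inductive unless $x_{i-1}$ is assumed, so $\tau_i$ is: ``if $\neg x_{i-1}$ then set $x_i$ false, frame the rest.'' Moreover, the bad states must require all $x_i$ to be false simultaneously, or else we need each conjunct. With $\beta=\bigvee_i\neg x_i$, safety already forces every $x_i$ to appear in the proof.

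For the lower bound, suppose $Q$ is an $\FIsys_{n-1}^\predicates$-proof of $\Pi$. By \Cref{thm:F-invariant}, $\FInv(Q)$ is a conjunction of at most $n-1$ atoms from $\predicates$ that is a safe inductive invariant. Some $x_j$ is therefore missing from the conjunction, and the state with all variables true except $x_j$ satisfies $\FInv(Q)$ and $\beta$. This contradicts safety.

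The main obstacle is making this counting argument valid. (Ind) rules inside $Q$ are applied to strengthened problems, but every predicate they use still lies in $\predicates$, and \Cref{thm:F-invariant} counts them, so the argument holds. One must also check that the intended $n$-step proof is actually valid, in particular that each (Ind) premise $(\iota\wedge\dots,\tau\wedge\dots,\neg x_i)$ meets initiation and consecution under the accumulated axioms. This verification is routine and I would carry it out last.
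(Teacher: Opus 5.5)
Your proposal is correct and follows essentially the paper's route: the first part is exactly the corollary of \Cref{thm:F-invariant}, and your separation mirrors the paper's argument for \Cref{thm:FB-power}, with nullary predicates $\predicates=\{x_1,\dots,x_n\}$, a problem whose only safe inductive invariant (up to equivalence) is $\bigwedge_i x_i$, and the observation that no conjunction of at most $n-1$ atoms can express it. Your transitions $\tau_i$ are harmless but unnecessary, because $\iota\equiv\neg\beta$ already forces every safe inductive invariant to equal $\bigwedge_i x_i$ whatever $\tau$ is (the paper's analogous construction uses $(\varphi,\bot,\neg\varphi)$), and you should drop the remark about making $\beta$ require all $x_i$ false simultaneously, since that choice would make $x_1$ alone a safe inductive invariant, whereas your final choice $\beta=\bigvee_i\neg x_i$ is the right one.
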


\begin{corollary}
    If $\predicates$ is closed under conjunction, then any $\FIsys^\predicates$-provable problem is $\Fsys^\predicates$-provable.
\end{corollary}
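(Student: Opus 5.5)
The statement to prove is the corollary: if $\predicates$ is closed under conjunction, then every $\FIsys^\predicates$-provable problem is $\Fsys^\predicates$-provable. The plan is to derive this directly from \Cref{thm:FI-power}, or equivalently from \Cref{thm:F-invariant}, using the closure hypothesis to collapse $\wedge_n\predicates$ back into $\predicates$.

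Let $P$ be an $\FIsys^\predicates$-proof of $\Pi=(\iota,\tau,\beta)$, and let $n$ be the (finite) number of applications of (Ind) in $P$, so that $P\in\FIsys^\predicates_n$. By \Cref{thm:F-invariant}, $\FInv(P)$ is a safe inductive invariant of $\Pi$ and is a conjunction of $n$ predicates from $\predicates$. Since $\predicates$ is closed under conjunction, a straightforward induction on $n$ shows that such a conjunction again lies in $\predicates$. Writing $\psi=\FInv(P)$, we therefore have $\psi\in\predicates$, $\iota\Rightarrow\psi$, $\psi\wedge\tau\Rightarrow\psi'$ and $\psi\Rightarrow\neg\beta$.

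We then build the collapsed $\Fsys$-proof: apply (Ind) to $\psi$ to derive $(\iota,\tau,\neg\psi)$, whose side conditions are exactly initiation and consecution of $\psi$, and then apply (Cons) with side condition $\psi\Rightarrow\neg\beta$ to obtain $(\iota,\tau,\beta)$. The only instance of (Ind) uses $\psi\in\predicates$, so this is an $\Fsys^\predicates$-proof of $\Pi$. Alternatively, one can cite the first part of \Cref{thm:FI-power} to get $\Fsys^{\wedge_n\predicates}$-provability and observe that $\wedge_n\predicates\subseteq\predicates$.

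The only step needing care is the degenerate case $n=0$. A proof tree must have leaves, and the only leaf rule is (Ind), so $n\ge1$ and the empty conjunction never arises. Apart from this, the argument is a direct instantiation of the earlier results, and no real obstacle is expected.
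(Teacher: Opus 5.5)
Your proposal is correct and follows essentially the paper's route. The paper states this corollary without a separate proof, as an immediate consequence of the first part of \Cref{thm:FI-power} (itself a consequence of \Cref{thm:F-invariant}), and your argument is that instantiation made explicit: closure under conjunction places $\FInv(P)$ in $\predicates$, and then a single (Ind) followed by (Cons) gives an $\Fsys^\predicates$-proof. Your remark that every proof tree has an (Ind) leaf, so the conjunction is never empty, correctly handles the one degenerate point the paper leaves implicit.
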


%% file: 4-fwd-bwd.tex
\subsection{Incremental Forward-Backward Proofs}
\label{sec:incremental-forward-backward-proofs}

The next development in our proof system comes from the observation that a safety problem is safe if and only if its time-reversed problem is safe. That is, to prove the safety of $\Pi=(\iota, \tau, \beta)$, we can equivalently prove the safety of $\Pi^{-1}=(\beta, \tau^{-1}, \iota)$. 
Intuitively, time-reversal allows us to prove safety problems whose
safe inductive invariants require negations, since $\varphi$ is a safe inductive invariant of $\Pi^{-1}$ if and only if $\neg \varphi$ is a safe inductive invariant of $\Pi$.
Although this observation seems simple, the key insight is that when combined with incrementality, a time-reversing inference step, which allows to switch back and forth between forward reasoning and backward reasoning, results in a significant increase of proof power. Namely, since incremental proofs allow us to mimic conjunctive invariants, the addition of negation through time-reversal induces a rich Boolean structure for inductive invariants that contain both negations and conjunctions (which form a functionally complete set of connectives). This means that by using incremental proofs that allow time-reversal steps we may sometimes simplify the Boolean structure of
auxiliary inductive invariants
used in the proof in cases where a safe inductive invariant of the safety problem requires a complex Boolean structure (beyond conjunctions).

\begin{definition}
    \label{def:fwd-bwd-inc-system}
    The \emph{incremental forward-backward proof system} $\FBIsys$ consists of the proof rules of $\FIsys$ (\Cref{def:fwd-ind-system}), and the additional time-reversal proof rule:
    \begin{center}
        \AxiomC{$(\beta, \tau^{-1}, \iota)$}
        \LeftLabel{(Rev)}
        \UnaryInfC{$(\iota, \tau, \beta)$}
        \bottomAlignProof
        \DisplayProof
    \end{center}
    For a class of predicates $\predicates$, we denote by $\FBIsys^\predicates$ the system $\FBIsys$ where applications of (Ind) are restricted to $\varphi\in\predicates$,
    and by $\FBIsys^\predicates_n$ the system where proofs use at most $n$ applications of (Ind).
\end{definition}

\begin{theorem}[Soundness]
    \label{thm:FBI-sound}
    Every safety problem that has an $\FBIsys$-proof is safe.
\end{theorem}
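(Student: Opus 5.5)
We argue by induction on the structure of an $\FBIsys$-proof, showing that every inference rule of $\FBIsys$ is sound: if all premises of a rule instance are safe and its side conditions hold, then its conclusion is safe. Since the leaves of a proof tree are instances of (Ind), which have no premises, the root is then safe. For (Ind), (Cons) and (Inc) we can reuse the argument behind \Cref{thm:FI-sound}. What remains is the new rule (Rev). Because the other rules will be applied to time-reversed problems as well, I will check each rule for an arbitrary safety problem $(\iota,\tau,\beta)$, without assuming anything about how it arose in the proof.

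\textbf{(Ind).} Suppose $\iota\Rightarrow\varphi$ and $\varphi\wedge\tau\Rightarrow\varphi'$. A simple induction on trace length shows that every state on a trace starting in $\iota$ satisfies $\varphi$. Hence no trace reaches $\neg\varphi$, so $(\iota,\tau,\neg\varphi)$ is safe.

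\textbf{(Cons).} Suppose $\varphi\Rightarrow\neg\beta$, so every $\beta$-state is a $\neg\varphi$-state. A trace from $\iota$ to $\beta$ would then be a trace from $\iota$ to $\neg\varphi$, contradicting the premise.

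\textbf{(Inc).} Suppose $(\iota,\tau,\neg\varphi)$ is safe and take any trace $s_0,\dots,s_k$ with $s_0\models\iota$ and $s_k\models\beta$. By the first premise, every $s_i$ satisfies $\varphi$, since otherwise the prefix ending at $s_i$ would reach $\neg\varphi$. Therefore $s_0\models\iota\wedge\varphi$, each pair $(s_i,s_{i+1})$ satisfies $\tau\wedge\varphi\wedge\varphi'$, and $s_k\models\beta\wedge\varphi$. This is an error trace of the second premise, a contradiction.

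\textbf{(Rev).} This is the key step. Suppose $(\beta,\tau^{-1},\iota)$ is safe and $s_0,\dots,s_k$ is an error trace of $(\iota,\tau,\beta)$. By the definition of $\tau^{-1}$ (swapping $\Sigma$ and $\Sigma'$), a pair $(s,t)$ satisfies $\tau^{-1}$ exactly when $(t,s)$ satisfies $\tau$. So the reversed sequence $s_k,\dots,s_0$ is a $\tau^{-1}$-trace from a $\beta$-state to an $\iota$-state, contradicting the premise.

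The main subtlety is stating precisely the semantic fact about $\tau^{-1}$, namely that a structure over $\Sigma\uplus\Sigma'$ representing the pair $(s,t)$ satisfies $\tau^{-1}$ iff the swapped structure satisfies $\tau$. Once that is fixed, the rest is routine. We also note that the soundness lemmas must hold for every safety problem, not only for problems in the forward orientation, because the problems appearing inside a proof alternate between the two time directions.
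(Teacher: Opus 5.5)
Your proposal is correct and matches the paper's proof. The paper's proof consists only of the remark that every inference rule of $\FBIsys$ is sound. You supply the rule-by-rule verification it leaves implicit, and you handle the new (Rev) case correctly by reversing an error trace of $(\iota,\tau,\beta)$ into a $\tau^{-1}$-trace from a $\beta$-state to an $\iota$-state.
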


\ifproofs
\begin{proof}
    All inference rules of $\FBIsys$ are sound.
\end{proof}
\fi

Although (Rev) is the only necessary addition to the proof system for harnessing backward induction, it is often more concise to write proofs using derived backward variants of (Ind) and (Inc), which directly allow backward induction and incrementality.

\begin{theorem}
    \label{def:FI-derived}
    The following are derived rules of $\FBIsys$:
    \begin{center}
        \AxiomC{}
        \LeftLabel{(B-Ind)}
        \RightLabel{$\beta \Rightarrow \varphi$, $\varphi' \wedge \tau \Rightarrow \varphi$}
        \UnaryInfC{$(\neg\varphi, \tau, \beta)$}
        \bottomAlignProof
        \DisplayProof
        \hspace{1.5em}
        \AxiomC{$(\neg \varphi, \tau, \beta)$}
        \LeftLabel{(B-Cons)}
        \RightLabel{$\varphi\Rightarrow \neg \iota$}
        \UnaryInfC{$(\iota, \tau, \beta)$}
        \bottomAlignProof
        \DisplayProof\\[0.1cm]
        \AxiomC{$(\neg\varphi, \tau, \beta)$}
        \AxiomC{$(\iota\wedge\varphi, \tau\wedge\varphi\wedge\varphi', \beta\wedge\varphi)$}
        \LeftLabel{(B-Inc)}
        \BinaryInfC{$(\iota, \tau, \beta)$}
        \bottomAlignProof
        \DisplayProof
    \end{center}
\end{theorem}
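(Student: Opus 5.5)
Each derived rule is obtained by wrapping the corresponding forward rule between two applications of (Rev), using two facts. First, $(\tau^{-1})^{-1} = \tau$. Second, inverting a strengthened transition relation commutes with the strengthening:
\[(\tau\wedge\varphi\wedge\varphi')^{-1} = \tau^{-1}\wedge\varphi'\wedge\varphi.\]
Both hold by definition of $\tau^{-1}$ as swapping $\Sigma$ and $\Sigma'$. Otherwise the proof is just writing out the three derivation trees.

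For (B-Ind), start from the reversed problem $(\beta,\tau^{-1},\neg\varphi)$ and apply (Ind) to it. Its side conditions are $\beta\Rightarrow\varphi$ and $\varphi\wedge\tau^{-1}\Rightarrow\varphi'$. After renaming $\Sigma\leftrightarrow\Sigma'$, the second condition is exactly $\varphi'\wedge\tau\Rightarrow\varphi$. Applying (Rev) to $(\beta,\tau^{-1},\neg\varphi)$ then yields $(\neg\varphi,\tau,\beta)$, because reversing swaps the initial and bad formulas and $(\tau^{-1})^{-1}=\tau$. The one thing to check is that validity of an implication is preserved under this bijective renaming of symbols, which is immediate.

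For (B-Cons), begin with the premise $(\neg\varphi,\tau,\beta)$ and apply (Rev) to get $(\beta,\tau^{-1},\neg\varphi)$; strictly speaking, (Rev) derives $(\neg\varphi,\tau,\beta)$ from $(\beta,\tau^{-1},\neg\varphi)$, so this direction relies on the next remark. Then apply (Cons) with side condition $\varphi\Rightarrow\neg\iota$ to obtain $(\beta,\tau^{-1},\iota)$, and a final (Rev) gives $(\iota,\tau,\beta)$.

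That first step needs care, since (Rev) only goes from a reversed problem to the original one. To use a proof of $\Pi$ as a proof of $\Pi^{-1}$, apply (Rev) to $(\Pi^{-1})^{-1}=\Pi$: the rule instance with conclusion $\Pi^{-1}=(\beta,\tau^{-1},\iota)$ has premise $(\iota,(\tau^{-1})^{-1},\beta)=\Pi$. So (Rev) can be used in both directions, and I would state this as a small preliminary remark.

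For (B-Inc), use the remark to turn the premise $(\neg\varphi,\tau,\beta)$ into $(\beta,\tau^{-1},\neg\varphi)$. Likewise, turn the second premise $(\iota\wedge\varphi,\tau\wedge\varphi\wedge\varphi',\beta\wedge\varphi)$ into $(\beta\wedge\varphi,\tau^{-1}\wedge\varphi\wedge\varphi',\iota\wedge\varphi)$, using the inversion identity above. These two problems are exactly the premises of (Inc) for the problem $(\beta,\tau^{-1},\iota)$ with invariant $\varphi$. Applying (Inc) yields $(\beta,\tau^{-1},\iota)$, and (Rev) yields $(\iota,\tau,\beta)$.

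I expect no real obstacle. The only point needing care is that rule premises and conclusions are syntactic formulas. Reversing twice, or inverting $\tau\wedge\varphi\wedge\varphi'$, may produce a formula that is only equal up to commutativity of $\wedge$ or symbol renaming. I would treat safety problems up to logical equivalence of their components, or equivalently note that such rewriting is justified by (Cons)-style equivalence, and state this convention explicitly.
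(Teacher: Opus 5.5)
Your proposal is correct and follows the paper's argument exactly: reverse each premise with (Rev), apply the corresponding $\FIsys$ rule to the reversed problem, and reverse the conclusion. Your explicit remarks fill in details the paper leaves implicit. First, (Rev) can be run in the other direction because $(\tau^{-1})^{-1}=\tau$. Second, $(\tau\wedge\varphi\wedge\varphi')^{-1}$ matches the (Inc) premise only up to reordering conjuncts; that mismatch in the transition component is covered by your up-to-equivalence convention, not by (Cons), which only modifies the bad-state component.
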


\begin{proof}
    Each of the rules can be derived by applying (Rev) to each of the premises, applying the respective $\FIsys$ rule, and applying (Rev) to the resulting conclusion.
\end{proof}

Similarly to \Cref{def:F-invariant} for $\FIsys$, every $\FBIsys$-proof can be converted into a safe inductive invariant for the safety problem in its root. As mentioned before, the only change is the handling of (Rev), which adds a negation to the computed invariant due to the time-reversal. When combined with the conjunctions introduced by the (Inc) rule in \Cref{def:F-invariant}, the computed invariant can now be equivalent to any arbitrary Boolean combination predicates from the considered class.

\begin{definition}
    \label{def:FB-invariant}
    For every proof $P\in \FBIsys$ define $\FBInv(P)$ by induction on proof structure:
    \begin{itemize}
        \item if the root of $P$ is not a conclusion of (Rev), define $\FBInv(P)$ like $\FInv(P)$ in \Cref{def:F-invariant};
        \item if the root of $P$ is a conclusion of (Rev) with a premise whose proof is $\Tilde{P}$, then
        \[\FBInv(P)= \neg\FBInv(\Tilde{P}).\]
    \end{itemize}
\end{definition}

\begin{theorem}
    \label{thm:FB-invariant}
    Let $P\in \FBIsys^\predicates$ be an incremental forward-backward proof of a safety problem $\Pi$. Then $\FBInv(P)$ is a safe inductive invariant of $\Pi$, and if $P$ contains $n\in \nats$ instances of the (Ind) rule, then $\FBInv(P)$ is a Boolean combination of $n$ predicates from $\predicates$.
\end{theorem}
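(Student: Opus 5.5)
The proof goes by structural induction on $P$, proving a statement strong enough to carry through (Inc): if $P$ proves $(\iota,\tau,\beta)$, then $I=\FBInv(P)$ satisfies $\iota\Rightarrow I$, $I\wedge\tau\Rightarrow I'$ and $I\Rightarrow\neg\beta$. The syntactic half of the statement is immediate. (Ind) contributes one predicate from $\predicates$. (Cons) passes the formula through unchanged. (Inc) takes the conjunction of the two premises' formulas, and (Rev) negates its premise's formula. So $\FBInv(P)$ is built from exactly the predicates of the (Ind) leaves using $\wedge$ and $\neg$, with as many occurrences as there are leaves. The real work is the semantic part, done rule by rule.

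\emph{(Ind).} The side conditions are exactly initiation and consecution for $\varphi$, and safety $\varphi\Rightarrow\neg\neg\varphi$ is trivial.

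\emph{(Cons).} The premise's invariant $\tilde I$ satisfies $\tilde I\Rightarrow\varphi$ and $\varphi\Rightarrow\neg\beta$, so $\tilde I\Rightarrow\neg\beta$; initiation and consecution are shared with the premise.

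\emph{(Rev).} By induction, $\tilde I$ is a safe inductive invariant of $(\beta,\tau^{-1},\iota)$: $\beta\Rightarrow\tilde I$, $\tilde I\wedge\tau^{-1}\Rightarrow\tilde I'$, and $\tilde I\Rightarrow\neg\iota$. Contraposing the first and last gives $\neg\tilde I\Rightarrow\neg\beta$ and $\iota\Rightarrow\neg\tilde I$. For consecution, rename primed and unprimed symbols in the middle condition to get $\tilde I'\wedge\tau\Rightarrow\tilde I$. Its contrapositive is $\neg\tilde I\wedge\tau\Rightarrow\neg\tilde I'$. So $\neg\tilde I$ is a safe inductive invariant of $(\iota,\tau,\beta)$.

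\emph{(Inc).} Here $I_1$ comes from the proof of $(\iota,\tau,\neg\varphi)$, and $I_2$ is a safe inductive invariant of $(\iota\wedge\varphi,\ \tau\wedge\varphi\wedge\varphi',\ \beta\wedge\varphi)$. From $I_1$ we get $\iota\Rightarrow I_1$, $I_1\wedge\tau\Rightarrow I_1'$, and $I_1\Rightarrow\varphi$.
\begin{itemize}
\item Initiation: $\iota$ implies $I_1$, hence $\varphi$, hence (as $\iota\wedge\varphi$ is the strengthened initial condition) $I_2$.
\item Consecution: $I_1\wedge I_2\wedge\tau$ gives $I_1'$, hence $\varphi'$. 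It also gives $\varphi$ via $I_1$. So $\tau\wedge\varphi\wedge\varphi'$ holds, and $I_2$'s consecution yields $I_2'$.
\item Safety: $I_1\wedge I_2$ implies $\varphi$ and $\neg(\beta\wedge\varphi)$, hence $\neg\beta$.
\end{itemize}

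The one subtle point is that the sub-problems appearing in $P$ are no longer the original one, since (Rev) swaps $\iota$ and $\beta$ and (Inc) strengthens all three components. The induction hypothesis must therefore be stated for an arbitrary safety problem at the root of an arbitrary subproof, exactly as above. Once that is done, (Inc) is the step needing care, because it is the only place where the premise's invariant is for a modified system. The argument above resolves it by using $I_1\Rightarrow\varphi$ on both the pre-state and the post-state, which recovers the strengthened transition relation. The derived backward rules need no separate treatment, since they unfold to (Rev), (Ind), (Cons), (Inc).
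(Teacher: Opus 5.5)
Your proof is correct and follows the paper's route: structural induction on $P$, with the (Ind), (Cons) and (Inc) cases argued as in the forward-only theorem, and the (Rev) case showing by contraposition that the negation of a safe inductive invariant of $(\beta,\tau^{-1},\iota)$ is a safe inductive invariant of $(\iota,\tau,\beta)$. In the (Rev) case you derive safety from $\beta\Rightarrow\tilde I$, which is the correct source; the paper's corresponding line is garbled.
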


The proof of the theorem is deferred to 
\ifarxiv
\Cref{sec:appendix}%
\else
the extended version of the paper~\cite{forward-backward-extended}%
\fi.

The overall results about the proof power of $\FBIsys$ are summarized as follows.

\begin{theorem}
    \label{thm:FB-power}
    For every $\predicates$, all safety problems that are provable in $\FBIsys_n^\predicates$ are also provable in $\Fsys^{\predicates_n}$, where $\predicates_n$ consists of all Boolean combinations of at most $n$ predicates from $\predicates$.
    However, for every $n>0$ there exist a set of predicates $\predicates$ and a safety problem $\Pi$ such that $\Pi$ is provable in $\FBIsys_n^\predicates$ but
    not provable in either $\FBIsys_{n-1}^\predicates$, $\FIsys^\predicates$, or $\Fsys^\predicates$.
\end{theorem}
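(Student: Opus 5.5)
The first claim follows directly from \Cref{thm:FB-invariant}. Given a proof $P\in\FBIsys_n^\predicates$ of $\Pi=(\iota,\tau,\beta)$, the formula $\FBInv(P)$ is a safe inductive invariant of $\Pi$. It is a Boolean combination of the (at most $n$) predicates used in the (Ind) instances of $P$, so it lies in $\predicates_n$. A single (Ind) step on $(\iota,\tau,\neg\FBInv(P))$, followed by (Cons) using $\FBInv(P)\Rightarrow\neg\beta$, is then an $\Fsys^{\predicates_n}$-proof of $\Pi$.

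For the separation claim I will build, for each $n$, a propositional family that generalizes the dealer/player example of \Cref{sec:overview-boolean}. The idea is to force a safe inductive invariant whose Boolean structure alternates between conjunction and negation $n$ times. I take variables $x_1,\dots,x_n$, set $\predicates=\{\ell_1,\dots,\ell_n\}$ where each $\ell_i$ is a literal (or, if needed, a small clause), and design $\iota,\tau,\beta$ so that a proof exists which alternates directions. The proof proceeds as follows: $\ell_1$ is backward inductive; under the axiom $\ell_1$, the predicate $\ell_2$ becomes forward inductive; under $\ell_1\wedge\ell_2$, the predicate $\ell_3$ becomes backward inductive; and so on, until $\bigwedge_i\ell_i$ implies $\neg\beta$ or $\neg\iota$. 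Each transition disjunct is enabled only when some earlier axiom fails, in the same way that $\tau_1$ was disabled by $\varphi_3$. Verifying the positive direction then amounts to routine checks of the side conditions of (Ind), (B-Ind), (Inc) and (B-Inc), using the derived rules of \Cref{def:FI-derived}.

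For the negative direction, $\predicates$ is finite and I exploit its smallness. Since $\Fsys^\predicates$-provability is subsumed by $\FIsys^\predicates$-provability, I only need to rule out $\FIsys^\predicates$ and $\FBIsys^\predicates_{n-1}$. By \Cref{thm:FI-power} and \Cref{thm:F-invariant}, any $\FIsys^\predicates$ proof yields a safe inductive invariant that is a conjunction of elements of $\predicates$. I will enumerate these conjunctions and check directly that each fails initiation, consecution or safety; I will design $\Pi$ so that its forward-reachable states already violate the later literals, which makes this check easy. For $\FBIsys^\predicates_{n-1}$ I cannot rely on semantics alone, since Boolean combinations of $n-1$ literals are still rich. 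Instead I will argue by induction on the proof structure. Every (Ind) conclusion must be sound for the current restricted problem, and in the restricted problem reached after adding axioms, only the "next" literal in the chain is (forward or backward) inductive. Consequently, any proof must use all $n$ literals, each through its own (Ind) application.

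The main obstacle is this last lower bound. It requires characterizing exactly which predicates in $\predicates$ are forward or backward inductive under each possible set of accumulated axioms, including axioms introduced in the "wrong" direction or in the wrong order. If that becomes messy, my fallback is to use a different set of fresh variables for each layer, so that the transition disjuncts are independent. Then a literal $\ell_{i+1}$ is inductive, in either direction, only when $\ell_i$ already holds as an axiom, and this dependency chain has length exactly $n$.
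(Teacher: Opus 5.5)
Your first part matches the paper's argument. The separation part has a genuine gap. You never fix $\iota,\tau,\beta$, and the lower bound against $\FBIsys^\predicates_{n-1}$ is left as a plan whose key step you yourself flag as unresolved.

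The proof-structural induction you propose is also harder than it looks, because the axioms accumulated by (Inc) and (B-Inc) are not restricted to $\predicates$. Only (Ind) is restricted. The left premise $(\iota,\tau,\neg\varphi)$ may come from (Cons) applied to any proven invariant, possibly one built with nested (Rev) steps. So $\varphi$ can be an arbitrary weakening of a Boolean combination of earlier predicates, of either polarity. A claim like ``under the accumulated axioms only the next literal in the chain is inductive'' would then have to hold for all such axiom formulas, not just for prefixes of your intended chain. Your fallback with independent variable blocks still leaves that case analysis to be done.

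The missing idea is that the lower bound is purely semantic, via the first half of the statement itself. By \Cref{thm:FB-invariant}, any $\FBIsys^\predicates_{n-1}$-proof yields a safe inductive invariant that is a Boolean combination of at most $n-1$ members of $\predicates$. By \Cref{thm:F-invariant}, any $\FIsys^\predicates$-proof yields a conjunction of members of $\predicates$. The ``richness'' of Boolean combinations of $n-1$ literals is irrelevant; what matters is that they mention at most $n-1$ atoms. It therefore suffices to pick $\Pi$ all of whose safe inductive invariants depend essentially on all $n$ atoms and are not conjunctions of atoms.

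Every safe inductive invariant lies between the forward-reachable states and the complement of the states backward-reachable from $\beta$. The easiest way to control all of them is to collapse that interval. The paper does this as follows:
\begin{itemize}
\item It takes nullary $p_1,\dots,p_n$, $\predicates=\{p_1,\dots,p_n\}$, $\varphi=p_1\wedge(\neg p_2\vee(p_3\wedge(\neg p_4\vee\cdots)))$ and $\Pi=(\varphi,\bot,\neg\varphi)$.
\item Initiation and safety then force every safe inductive invariant to be equivalent to $\varphi$. This formula depends on all $n$ atoms and is not a conjunction of atoms.
\item Alternating forward and backward incremental steps on $p_1,p_2,\dots$ gives an $\FBIsys^\predicates_n$-proof.
\end{itemize}
The alternation of directions you wanted to engineer through transition disjuncts comes for free from the alternation of $\wedge$ and $\neg$ in $\varphi$.

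Your instinct to begin the chain with a backward step is in fact the right one for $n=1$. There the paper's $\varphi$ degenerates to $p_1$, which is $\Fsys^\predicates$-provable. By contrast, $\neg p_1\vee(p_2\wedge(\neg p_3\vee\cdots))$ works for every $n\ge 1$.
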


\begin{proof}
    The first part of the theorem is a direct corollary of \Cref{thm:FB-invariant}.
    For the second part,
    define the vocabulary with $n$ nullary relations $\Sigma = \{p_1,\dots,p_n\}$ and the set of predicates $\predicates = \{p_1,\dots,p_n\}$.
    Denote $\varphi = p_1\wedge (\neg p_2 \vee (p_3 \wedge (\neg p_4 \vee \cdots)))$ and consider the safety problem $\Pi = (\varphi, \bot, \neg\varphi)$. This safety problem is provable in $\FBIsys^\predicates_n$, but
    not provable in either $\FBIsys_{n-1}^\predicates$, $\FIsys^\predicates$, or $\Fsys^\predicates$, since any safe inductive invariant of $\Pi$ must be equivalent to $\varphi$, and there exists no such formula consisting of fewer than $n$ predicates, or purely of conjunctions over $\predicates$.
\end{proof}

%% file: 5-witnesses.tex
\section{Incremental Forward-Backward Proofs with Prophecy}
\label{sec:proph}

The proof systems presented in \Cref{sec:incremental-proofs} let us prove safety properties using inductive invariants of a simpler Boolean structure than that of any safe inductive invariant of the original safety problem. However, the simplification afforded by $\FIsys$ and $\FBIsys$ \emph{only} allows simplifying the Boolean structure of inductive invariants; that is, if a safe inductive invariant cannot be expressed as a Boolean combination of several simpler predicates, e.g., due to quantification, then there is no hope for these proof systems to simplify the kind of predicates involved in a safety proof.

Our goal in this section is to augment the forward-backward proof system $\FBIsys$ with an additional rule that is able to eliminate quantifier nesting and alternation, and thus simplify proofs of problems that require quantified invariants that cannot otherwise be broken down into a Boolean combination of simpler predicates. We first introduce a general prophecy proof rule and then discuss specific heuristics for adding prophecy.

\subsection{Prophecy in Forward-Backward Proofs}
\label{sec:general-prophecy}

We introduce a \emph{prophecy proof rule} that prophesies
that there exists an element $x$ that always satisfies some formula $\varphi(x)$
and uses it to simplify the proof by extending the vocabulary with a fresh constant symbol, $w$, that serves as a \emph{witness} for the existential quantifier. The rule then adds $\varphi(w)$ as an axiom. To ensure soundness, it is required that the added axiom does not affect the error traces of the system. Then, by reasoning about the safety problem with the prophecy transformation, we are able to prove properties of the witness which translate to quantified properties of the original system. Since this proof system is an extension of $\FBIsys$, it is able to eliminate both quantification as well as logical connectives such as conjunctions and disjunctions.

We begin by describing the safety problem resulting from adding the prophecy witness constant. For the sake of simplicity, we consider adding a single witness which remains fixed along any given trace.
The approach naturally generalizes to multiple prophecy symbols, and to a prophecy that is updated in each transition according to a given update rule.

\begin{definition}
    \label{def:prophecy-refinement}
    Let $\Pi = (\iota,\tau,\beta)$ be a safety problem over vocabulary $\Sigma$, let $w$ be a fresh constant symbol, and let $\varphi(x)$ be a formula over $\Sigma$.
    Define the safety problem $\Pi^w_{\varphi}$ over $\Sigma\cup \{w\}$ as
    \[ \Pi^w_{\varphi} = (\iota\wedge\varphi(w), \varphi(w) \wedge \tau \wedge w' = w \wedge (\varphi(w))', \beta\wedge\varphi(w)). \]
    Then $\varphi(w)$ is a \emph{sound prophecy} for $\Pi$ if
    for every trace of $\Pi$ from $\iota$ to $\beta$, there exists an interpretation of $w$ that results in a trace of $\Pi^w_\varphi$ from $\iota\wedge\varphi(w)$ to $\beta\wedge\varphi(w)$.
\end{definition}

The soundness condition states that there is a suitable witness for every trace from an initial state to a bad state, where the witness satisfies $\phi$ all along the trace. This means that, when a bad state is reached, there exists an $x$ (the witness) such that $\phi(x)$ has been always true in the past. Therefore, we can formulate the soundness condition in a complete way as a temporal safety problem whose bad states are bad states of the original system where the aforementioned condition does not hold. Namely, for every $x$, either $\phi(x)$ does not hold at the moment or it has not been always true in the past. To track whether $\phi(x)$ has always been true in the past, we follow a standard tableau construction for the ``always in the past'' temporal operator, and augment the state of the system with a fresh predicate $m(\cdot)$ that remembers for each $x$ whether $\phi(x)$ has been always true in the past. Initially, every $x$ that satisfies $\phi(x)$ is added to $m$, and in every transition, if $x$ continues to satisfy $\phi(x)$, it remains in $m$. This ensures that if an element is not in $m$ then it has not always been true in the past. That is, we can use $\neg m(x)$ to identify elements $x$ that have not always been true in the past. The resulting safety problem is defined in the following theorem.

\begin{theorem}
    \label{thm:sound-prophecy-as-safety}
    Let $\Pi = (\iota,\tau,\beta)$ be a safety problem over vocabulary $\Sigma$, let $w$ be a fresh constant symbol, and let $\varphi(x)$ be a formula over $\Sigma$.
    Then $\varphi(w)$ is a sound prophecy for $\Pi$ 
    if and only if the following safety problem, over vocabulary $\Sigma \cup \{m(\cdot)\}$, where $m(\cdot)$ is a fresh relation symbol, is safe:
    \[\Pi_{\varphi}^\mathrm{sound} = (\iota\wedge \forall x. \varphi(x) \to m(x),
    \tau \wedge \forall x. (m(x)\wedge \varphi(x)\wedge \varphi'(x)) \to m'(x), \beta\wedge \forall x. \varphi(x) \to \neg m(x)).\]
    We call a safe inductive invariant of $\Pi_{\varphi}^\mathrm{sound}$ a \emph{soundness invariant} for $\varphi(w)$.
\end{theorem}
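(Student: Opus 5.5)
The plan is to prove both directions by contraposition, relating error traces of $\Pi$, $\Pi^w_\varphi$ and $\Pi^\mathrm{sound}_\varphi$ directly. We work with the semantics in which a trace is a sequence of structures $s_0,\dots,s_k$ over a common domain $D$, and the transitions relate consecutive structures over that domain. Two basic observations drive the argument.

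\emph{First observation.} Because of the conjunct $w'=w$, a trace of $\Pi^w_\varphi$ is exactly a trace of $\Pi$ together with a single element $e\in D$ interpreting $w$ in every state. The conjuncts $\varphi(w)$ in the initial condition, in both sides of $\tau$, and in the bad states force $s_j\models\varphi(e)$ for every $j\le k$. Consequently, $\varphi(w)$ is a sound prophecy iff every error trace $s_0,\dots,s_k$ of $\Pi$ admits some $e\in D$ with $s_j\models\varphi(e)$ for all $j\le k$.

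\emph{Second observation.} A trace of $\Pi^\mathrm{sound}_\varphi$ is a trace of $\Pi$ augmented with sets $m_0,\dots,m_k\subseteq D$ satisfying two constraints: $\{e : s_0\models\varphi(e)\}\subseteq m_0$, and $m_i\cap\{e: s_i\models\varphi(e)\wedge s_{i+1}\models\varphi(e)\}\subseteq m_{i+1}$. Nothing else about $\Sigma$ is constrained, so projecting away $m$ yields a trace of $\Pi$.

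\emph{($\Rightarrow$) Soundness implies safety of $\Pi^\mathrm{sound}_\varphi$.} Suppose $\Pi^\mathrm{sound}_\varphi$ has an error trace $(s_i,m_i)_{i\le k}$. By induction on $i$, using the two constraints, we get the invariant: if $s_j\models\varphi(e)$ for all $j\le i$, then $e\in m_i$. The projection $s_0,\dots,s_k$ is an error trace of $\Pi$, since $s_0\models\iota$ and $s_k\models\beta$. Soundness supplies some $e$ with $\varphi(e)$ true at every $s_j$. The invariant then gives $e\in m_k$, and $s_k\models\varphi(e)$ holds as well. This contradicts the bad-state conjunct $\forall x.\,\varphi(x)\to\neg m(x)$.

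\emph{($\Leftarrow$) Safety of $\Pi^\mathrm{sound}_\varphi$ implies soundness.} Suppose $\varphi(w)$ is not sound. Then some error trace $s_0,\dots,s_k$ of $\Pi$ has the property that every $e\in D$ violates $\varphi$ at some $s_j$ with $j\le k$. Take the minimal (tableau) choice
\[
m_i=\{e : s_j\models\varphi(e) \text{ for all } j\le i\}.
\]
We check the three conditions:
\begin{itemize}
\item The initial constraint holds with equality.
\item The transition constraint holds, since $e\in m_i$ together with $s_{i+1}\models\varphi(e)$ gives $e\in m_{i+1}$.
\item The bad-state condition holds: if $s_k\models\varphi(e)$ and $e\in m_k$, then $e$ would satisfy $\varphi$ throughout the trace, contradicting the choice of trace.
\end{itemize}
Hence $(s_i,m_i)_{i\le k}$ is an error trace of $\Pi^\mathrm{sound}_\varphi$, so that problem is unsafe.

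The main point needing care is the bookkeeping in the first observation. One must check that the frame condition $w'=w$, together with the fixed domain of a trace, really reduces prophecy soundness to the existence of one element satisfying $\varphi$ along the whole trace. The degenerate case $k=0$ must also be handled, and it is covered by the same argument. Everything else is a routine induction along the trace.
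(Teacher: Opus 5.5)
Your proof is correct and follows essentially the same route as the paper, which justifies this theorem only informally, in the paragraph preceding it. That paragraph reduces soundness of $\varphi(w)$ (via $w'=w$) to the existence of a single element on which $\varphi$ holds along the entire error trace, and tracks ``$\varphi(x)$ has always held in the past'' with the tableau predicate $m$. Your two observations, the over-approximation induction for ($\Rightarrow$), and the minimal choice $m_i=\{e : s_j\models\varphi(e) \text{ for all } j\le i\}$ for ($\Leftarrow$) are a faithful and complete formalization of that argument.
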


Next, we present the additional inference rule that allows proving a safety problem by first proving that a given prophecy $\varphi(w)$ is sound, and also proving that the safety problem under this prophecy assumption, and with a witness symbol, is safe. The soundness of this inference rule is a direct consequence of \Cref{def:prophecy-refinement}, which implies that for a sound prophecy $\varphi(w)$, a safety problem $\Pi$ is safe if and only if $\Pi^w_\varphi$ is safe; and \Cref{thm:sound-prophecy-as-safety}, which guarantees that the premise $\Pi^\mathrm{sound}_\varphi$ ensures that the prophecy used in the inference rule is sound.

\begin{definition}
    \label{def:fwd-bwd-proph-system}
    The \emph{incremental forward-backward proof system with prophecy} $\FBPIsys$ consists of the proof rules of $\FBIsys$ (\Cref{def:fwd-bwd-inc-system}), and the additional prophecy proof rule below, where $\Pi$ is a safety problem over vocabulary $\Sigma$, $w$ is a fresh constant symbol, and $\varphi(x)$ is a formula over $\Sigma$:
    \begin{center}
        \AxiomC{$\Pi_\varphi^\mathrm{sound}$}
        \AxiomC{$\Pi_\varphi^w$}
        \LeftLabel{(Proph)}
        \BinaryInfC{$\Pi$}
        \bottomAlignProof
        \DisplayProof
    \end{center}
    For a class of predicates $\predicates$, we denote by $\FBPIsys^\predicates$ the system $\FBPIsys$ where applications of (Ind) are restricted to $\varphi\in\predicates$,
    and by $\FBPIsys^\predicates_n$ the system where proofs use at most $n$ applications of (Ind).
\end{definition}

\begin{theorem}[Soundness]
    \label{thm:FBIP-sound}
    Every safety problem that has an $\FBPIsys$-proof is safe.
\end{theorem}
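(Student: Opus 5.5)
The argument is an induction on the structure of the $\FBPIsys$-proof. We show that every rule preserves safety: if all premises are safe and all side conditions hold, the conclusion is safe. By Theorem~\ref{thm:FBI-sound}, the rules (Ind), (Cons), (Inc) and (Rev) of $\FBIsys$ are already sound. Those arguments only used the semantics of safety problems, so they hold unchanged over any vocabulary, including the extended vocabularies $\Sigma\cup\{w\}$ and $\Sigma\cup\{m\}$ that appear inside prophecy subproofs. That leaves only (Proph): if $\Pi^\mathrm{sound}_\varphi$ and $\Pi^w_\varphi$ are both safe, then $\Pi$ is safe.

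For (Proph), suppose $\Pi^\mathrm{sound}_\varphi$ is safe. By Theorem~\ref{thm:sound-prophecy-as-safety}, $\varphi(w)$ is then a sound prophecy for $\Pi$ in the sense of Definition~\ref{def:prophecy-refinement}. Now suppose, for contradiction, that $\Pi$ is unsafe. Then there is a trace $s_0,\dots,s_k$ of $(\iota,\tau)$ with $s_0\models\iota$ and $s_k\models\beta$. Soundness of the prophecy gives an interpretation of $w$ turning this trace into a trace of $\Pi^w_\varphi$ from $\iota\wedge\varphi(w)$ to $\beta\wedge\varphi(w)$. That is an error trace of $\Pi^w_\varphi$, which contradicts the safety of the second premise. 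Hence $\Pi$ is safe.

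The main point needing care is the use of Theorem~\ref{thm:sound-prophecy-as-safety}: only its ``if'' direction is needed, namely that safety of $\Pi^\mathrm{sound}_\varphi$ implies soundness. If that direction had not already been established, we would prove it directly. Given an error trace of $\Pi$, suppose no $x$ satisfies $\varphi(x)$ at every state of the trace. Define $m$ in each state as the set of elements for which $\varphi$ has held at every step so far. This is a legal trace of $\Pi^\mathrm{sound}_\varphi$: initiation and the update constraint allow $m$ to be exactly this set. At the final state, every $x$ with $\varphi(x)$ lies outside $m$, so the final state is bad, contradicting safety. Therefore some element $a$ satisfies $\varphi$ throughout the trace, and interpreting $w$ as $a$ at every state meets all the constraints of $\Pi^w_\varphi$, including $w'=w$.

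Finally, we check that the induction is well-founded. The premises of (Proph) live over extended vocabularies, but each is a strictly smaller subproof, so structural induction on the proof tree applies regardless of the vocabulary.
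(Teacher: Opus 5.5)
Your proof is correct and follows the paper's argument. Every rule is sound, and (Proph) is justified by the ``if'' direction of Theorem~\ref{thm:sound-prophecy-as-safety} together with Definition~\ref{def:prophecy-refinement}. Your direct argument for that direction, taking $m$ to be exactly the set of elements on which $\varphi$ has held so far, is a valid addition, since the paper does not spell it out. One small point: what you need from the $\FBIsys$ part is rule-level soundness, as in the proof of Theorem~\ref{thm:FBI-sound}. The theorem's statement about whole $\FBIsys$-proofs is not enough, because inside an $\FBPIsys$-proof those rules may have premises established via (Proph).
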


\ifproofs
\begin{proof}
    All inference rules of $\FBPIsys$ are sound.
\end{proof}
\fi

Similarly to how we derived the rules (B-Ind), (B-Cons), and (B-Inc) of $\FBIsys$ in \Cref{def:FI-derived}, it is possible to derive a backward (Proph) rule by time-reversing each of the premises before (Proph) and then time-reversing its conclusion.
However, 
the resulting rule is isomorphic to the original (Proph) rule
(with the polarity of $m$ flipped), and is therefore redundant.

Next, we describe how to convert an incremental proof with prophecy into a safe inductive invariant for the same safety problem. As before, the (Ind) rule determines the predicates used in the proof, whereas (Inc) creates conjunctions, and (Rev) creates negations.
The crucial change is that when handling the (Proph) proof rule, we take the safe inductive invariant of $\Pi^\mathrm{sound}_\varphi$, denoted $\xi$, and substitute every occurrence of $m(t)$ for some term $t$ with $\psi[t/w]$, where $\psi$ is the safe inductive invariant of $\Pi^w_\varphi$.
We denote the result of this operation by $\xi[\psi/m]$, and assume w.l.o.g. that $\xi$ and $\psi$ use distinct variables.
Intuitively, this results in an inductive invariant for $\Pi$ because, since $\psi$ is an inductive invariant for witness elements that always satisfy $\varphi(x)$, $\psi[t/w]$ can be used as a valid interpretation of $m(t)$ to substitute into the soundness invariant $\xi$ while maintaining inductiveness.
Moreover, $\xi[\psi/m]$ implies the safety of $\Pi$. This is true
because the soundness invariant $\xi$ implies that any bad state we reach must have an element that satisfies both $\varphi(x)$ and $m(\cdot)$. Therefore,
the resulting invariant $\xi[\psi/m]$
implies that any bad state we reach must have a witness $w$ that satisfies $\varphi(w)$ and $\psi$. Moreover, $\psi$ implies the safety property of $\Pi^w_\varphi$, which states that if we have a witness $w$ satisfying $\varphi(w)$, then we are not in a bad state. This
shows we cannot reach a bad state at all, since if we have a witness $w$ satisfying both $\varphi(w)$ and $\psi$, we are not in a bad state.

Typically, the soundness invariant $\xi$ would include existential quantifiers since it has to imply the safety property of $\Pi^\mathrm{sound}_\varphi$, which is $\beta \to \exists x. \varphi(x) \wedge m(x)$. Moreover, occurrences of $m(\cdot)$ in $\xi$ are likely to be applied to terms $t(x)$ that include these existentially quantified variables. Therefore, when the occurrences of $m(t(x))$ in $\xi$ are substituted by $\psi[t(x)/w]$, the quantifiers of $\psi$ become nested under the quantifiers of $\xi$. Since the constant $w$ in $\psi$ is substituted by $t(x)$, this is often a real nesting that cannot be broken into a Boolean combination of un-nested quantifiers. The reason is that 
$\psi$ may include constraints that mix the quantified variables of $\psi$ and $w$, in which case
$\psi[t(x)/w]$ includes constraints that mix the quantified variables of $\psi$ and $x$.
Thus, the use of the (Proph) proof rule may allow us to reduce the number of quantifiers, and even quantifier nesting, that is required in a safe inductive invariant.

\begin{definition}
    \label{def:FBP-invariant}
    For every proof $P\in \FBPIsys$ define $\FBPInv(P)$ by induction on proof structure:
    \begin{itemize}
        \item if the root of  $P$ is not a conclusion of (Proph), define $\FBPInv(P)$ like $\FBInv(P)$ in \Cref{def:FB-invariant};
        \item if the root of $P$ is a conclusion of (Proph) with premises $\Pi_\varphi^\mathrm{sound}$ and $\Pi_\varphi^w$, whose proofs are $P_\mathrm{sound}$ and $P_w$, respectively,
        let $\xi = \FBPInv(P_\mathrm{sound})$ and $\psi = \FBPInv(P_w)$.
        Then, $\FBPInv(P) = \xi[\psi/m]$, where the notation indicates that we substitute every application $m(t)$ in $\xi$, for any term $t$, with the formula $\psi[t/w]$.
    \end{itemize}
\end{definition}

\begin{theorem}
    \label{thm:FBP-invariant}
    Let $P\in \FBPIsys^\predicates$ be an incremental forward-backward proof of a safety problem $\Pi$. Then $\FBPInv(P)$ is a safe inductive invariant of $\Pi$, and if $P$ contains $n\in \nats$ instances of the (Ind) rule, and the quantification depth in predicates of $\predicates$ is at most $d$, then $\FBPInv(P)$ is a formula with quantification depth at most $n\cdot d$.
\end{theorem}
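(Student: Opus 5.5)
We argue by structural induction on $P$, proving simultaneously that $\FBPInv(P)$ is a safe inductive invariant of the root problem and that its quantification depth is bounded. The (Ind), (Cons), (Inc) and (Rev) cases are handled exactly as in \Cref{thm:FB-invariant}, which we take as established. (Ind) yields $\varphi$ itself. (Cons) keeps the invariant and weakens only the safety side condition. For (Inc), with $\theta_1=\FInv(P_1)$ inductive for $(\iota,\tau)$ and $\theta_2$ inductive for the restricted problem, $\theta_1\wedge\theta_2$ is inductive, and $\theta_1\wedge\theta_2\wedge\beta$ is unsatisfiable because $\theta_1\Rightarrow\varphi$. For (Rev), $\neg\theta$ is a safe inductive invariant of $\Pi$ whenever $\theta$ is one of $\Pi^{-1}$.

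None of these rules increases depth beyond the maximum over their subproofs. Moreover, the number of (Ind) instances is additive across the two premises of (Inc) and (Proph). So the depth claim reduces to showing that in the (Proph) case $\mathrm{depth}(\xi[\psi/m])\le \mathrm{depth}(\xi)+\mathrm{depth}(\psi)$. This holds because substituting $\psi[t/w]$ for an atom $m(t)$ nests the quantifiers of $\psi$ under at most $\mathrm{depth}(\xi)$ quantifiers. With $n_1+n_2=n$ and the induction hypotheses, we obtain depth at most $n_1 d+n_2 d=nd$. We also note that (Ind) never mentions $m$ or $w$ beyond predicates of $\predicates$, so the bound is uniform.

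The main step is the (Proph) case. Let $\xi$ be a safe inductive invariant of $\Pi^\mathrm{sound}_\varphi$ over $\Sigma\cup\{m\}$, and let $\psi$ be one of $\Pi^w_\varphi$ over $\Sigma\cup\{w\}$. We use a semantic substitution lemma: for any structure $s$ over $\Sigma$, let $s_\psi$ be the expansion interpreting $m$ as $\{a \mid s,[w\mapsto a]\models\psi\wedge\varphi(w)\}$. Then $s\models\xi[(\psi\wedge\varphi)/m]$ iff $s_\psi\models\xi$. Since the paper substitutes only $\psi$, we first replace $\psi$ by $\psi\wedge\varphi(w)$. This is harmless because $\psi\wedge\varphi(w)$ is still inductive and safe for $\Pi^w_\varphi$: its initial and transition formulas already assert $\varphi(w)$ and $\varphi(w)'$. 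We then observe that in $\xi$ we may take $\psi$ to imply $\varphi(w)$ without changing depth beyond $d$ (or we note that $\varphi$ is conjoined in the transition guards anyway). I expect reconciling the paper's literal $\xi[\psi/m]$ with this strengthened version to be the delicate point. If needed, we prove the claim for $\psi\wedge\varphi(w)$ and remark that it coincides under the invariant.

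It then remains to check the three conditions for $\Pi$.

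\emph{Initiation.} If $s\models\iota$, then $s_\psi$ satisfies the initial condition of $\Pi^\mathrm{sound}_\varphi$: for any $a$ with $\varphi(a)$, the structure $(s,w\mapsto a)$ satisfies $\iota\wedge\varphi(w)$, hence $\psi$, hence $m(a)$. So $s_\psi\models\xi$.

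\emph{Consecution.} Given $s\models\xi[\psi/m]$ and $(s,s')\models\tau$, the pair $(s_\psi,s'_\psi)$ is a transition of $\Pi^\mathrm{sound}_\varphi$. Indeed, if $m(a)$, $\varphi(a)$ and $\varphi'(a)$ hold, then $\psi$ holds at $(s,a)$ and $(s,a)\to(s',a)$ is a $\Pi^w_\varphi$ step, so $\psi$ holds at $(s',a)$ by consecution of $\psi$, i.e.\ $m'(a)$. Therefore $s'_\psi\models\xi$.

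\emph{Safety.} If $s\models\beta\wedge\xi[\psi/m]$, then safety of $\xi$ yields some $a$ with $\varphi(a)\wedge m(a)$, i.e.\ $(s,a)\models\beta\wedge\varphi(w)\wedge\psi$. This contradicts the safety of $\psi$ for $\Pi^w_\varphi$.
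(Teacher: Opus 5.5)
Your argument is essentially the paper's proof. It is a structural induction in which the (Proph) case is handled as in \Cref{lem:prophecy-invariant-conversion}: expand a $\Sigma$-state with $m$ interpreted as the set of witnesses for $\psi$, transfer initiation, consecution and safety through $\Pi^{\mathrm{sound}}_\varphi$, and bound depth by $\mathrm{depth}(\xi)+\mathrm{depth}(\psi)$, with the (Ind) counts adding across premises.

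The one thing to fix is the detour through $\psi\wedge\varphi(w)$ that you flag as the delicate point. It is not needed, because your own three checks already work with $m$ interpreted as $\{a \mid s,[w\mapsto a]\models\psi\}$:
\begin{itemize}
\item the initial condition of $\Pi^{\mathrm{sound}}_\varphi$ only demands $m(a)$ for $a$ satisfying $\varphi$;
\item the transition constraint only demands $m'(a)$ when $\varphi(a)$ and $\varphi'(a)$ already hold;
\item the safety of $\xi$ hands you $\varphi(a)$ together with $m(a)$.
\end{itemize}
The substitution lemma holds for any formula $\chi$: $s\models\xi[\chi/m]$ iff the expansion of $s$ by $m:=\{a\mid s,[w\mapsto a]\models\chi\}$ satisfies $\xi$. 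Taking $\chi=\psi$ therefore gives the claim for $\FBPInv(P)=\xi[\psi/m]$ literally, which is exactly what the paper does.

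Keeping the strengthened version would actually open a gap, for three reasons.
\begin{itemize}
\item $\xi[(\psi\wedge\varphi)/m]$ is a different formula from $\FBPInv(P)$, and the theorem is about the latter.
\item Your fallback that the two ``coincide under the invariant'' is unsupported. Two safe inductive invariants need not be equivalent, and these two formulas differ whenever the truth of $\xi$ depends on $m$ at elements satisfying $\psi$ but not $\varphi$.
\item Its quantification depth is only bounded by $\mathrm{depth}(\xi)+\max(\mathrm{depth}(\psi),\mathrm{depth}(\varphi))$. The prophecy formula $\varphi$ is arbitrary rather than drawn from $\predicates$, so the $n\cdot d$ bound could fail.
\end{itemize}
Drop the detour and your proof matches the paper's.
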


The crux of the proof of  \Cref{thm:FBP-invariant} is 
formalized and proved
\ifarxiv
in \Cref{lem:prophecy-invariant-conversion} in \Cref{sec:appendix}%
\else
in the extended version~\cite{forward-backward-extended}%
\fi. 
Next, consider how the additional quantification depth in computed invariants may arise. To compute the invariant, in each application of the (Proph) inference rule we take an invariant $\xi$ of the premise $\Pi_\varphi^\mathrm{sound}$ and an invariant $\psi$ of the premise $\Pi_\varphi^w$,
and replace all occurrences of the relation $m(\cdot)$ applied to a term $t$ in $\xi$ with $\psi[t/w]$. Thus, if $\xi$ has quantification depth $k$, and $\psi$ has quantification depth $\ell$, the resulting formula $\xi[\psi/m]$ has quantification depth of at most $k + \ell$. Meanwhile, the quantification depth of the invariant computed for the (Inc) rule is the maximum of the quantification depths associated with the invariants of its two premises, and the rest of the inference rules have no effect on the quantification depth in the computed invariant at all. Therefore, if $n$ predicates are involved in the computation of the invariant, and each has quantification depth of at most $d$, then the resulting invariant cannot exceed quantification depth $n\cdot d$.

\begin{example}
    \label{ex:proph-quantifiers}
    Recall the problem $\Pi = (\iota, \tau, \beta)$ with vocabulary $\Sigma = \{a(\cdot), d(\cdot, \cdot)\}$ from \Cref{sec:overview-proph}:
    \begin{align*}    
        \iota &= \exists x. \neg a(x)\wedge \forall x,y. d(x,y)  \qquad \qquad  \beta = \forall x.\exists y. \neg d(x,y) \\
        \tau &= (\forall x. a(x) \leftrightarrow a'(x)) \wedge 
        \exists x_0, y_0. a(x_0) \wedge \forall x,y. d'(x,y) \leftrightarrow (d(x,y) \wedge (x\neq x_0 \vee y\neq y_0)) %
    \end{align*}
    Initial states of this system all contain some \emph{inactive} element, i.e., an element not satisfying $a(\cdot)$, and the interpretation of $a(\cdot)$ remains fixed along transitions. Moreover, in initial states all pairs of elements satisfy $d(\cdot, \cdot)$, but in each transition a pair $(x_0, y_0)$ is removed from the relation, assuming $a(x_0)$ holds. The bad states of the system are those where all elements $x$ have some $y$ that is not in the relation with them. Intuitively, this cannot happen, since for any trace there exists some $x$ where $\neg a(x)$ holds throughout, and for this $x$ no pairs can be removed from the relation, which initially contains $(x, y)$ for every $y$. This can be proven using the following inductive invariant with quantification depth 2 (with a real nesting of quantifiers): $\exists x. \neg a(x) \wedge \forall y. d(x,y)$.

    However, using the proof system $\FBPIsys$, the safety of this system can be proven using predicates with a single existential quantifier only, by introducing a prophecy witness for $\neg a(x)$, where we let $\Pi^\mathrm{sound}_{\neg a(x)} = (\iota_m, \tau_m, \beta_m)$ and $\Pi^w_{\neg a(x)} = (\iota_w, \tau_w, \beta_w)$:
    \begin{prooftree}
        \AxiomC{}
        \LeftLabel{(Ind)}
        \UnaryInfC{$(\iota_m, \tau_m, \neg \exists x. \neg a(x)\wedge m(x))$}
        \LeftLabel{(Cons)}
        \UnaryInfC{$\Pi^\mathrm{sound}_{\neg a(x)}$}
        \AxiomC{}
        \LeftLabel{(B-Ind)}
        \UnaryInfC{$(\neg \exists y. \neg d(w, y), \tau_w, \beta_w)$}
        \LeftLabel{(B-Cons)}
        \UnaryInfC{$\Pi^w_{\neg a(x)}$}
        \LeftLabel{(Proph)}
        \BinaryInfC{$\Pi$}
        \bottomAlignProof
    \end{prooftree}
    We prove the soundness problem $\Pi^\mathrm{sound}_{\neg a(x)}$ using an inductive invariant that says there is always an element satisfying both $\neg a(\cdot)$ and $m(\cdot)$, which means that adding a witness with axiom $\neg a(w)$ is sound. We then prove the problem $\Pi^w_{\neg a(x)}$ with the added witness via a backward inductive invariant that shows there is some element $y$ that is not in the relation $d(\cdot, \cdot)$ with $w$. This is indeed a backward inductive invariant since it is implied by $\beta$, and since backward transitions cannot change $\neg a(w)$ or $r(x, y)$ for any $x$ with $\neg a(x)$, backward consecution also holds. Lastly, if we denote this proof by $P$, we get $\FBPInv(P) = \exists x. \neg a(x) \wedge \neg \exists y. \neg d(x, y)$,  which is equivalent to the safe inductive invariant $\exists x. \neg a(x) \wedge \forall y. d(x,y)$ we considered before.
\end{example}

Finally, the overall results about the proof power of $\FBPIsys$ are summarized as follows.

\begin{theorem}
    \label{thm:FBP-power}
    For every $\predicates$ whose predicates have quantification depth of at most $d\in\nats$, all safety problems that are provable in $\FBPIsys_n^\predicates$ are also provable in $\Fsys^{\mathbb{Q}_{n\cdot d}}$, where $\mathbb{Q}_{n\cdot d}$ consists of all predicates with quantification depth at most $n\cdot d$.
    However, for every $n>0$ there exist $\predicates$ and a safety problem $\Pi$ such that $\Pi$ is provable in $\FBPIsys_n^\predicates$ but
    not provable in either $\FBPIsys_{n-1}^\predicates$, $\FBIsys^\predicates$, $\FIsys^\predicates$, or $\Fsys^\predicates$.
\end{theorem}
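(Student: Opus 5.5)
The first part follows directly from \Cref{thm:FBP-invariant}. Given an $\FBPIsys_n^\predicates$-proof $P$ of $\Pi$, the formula $\FBPInv(P)$ is a safe inductive invariant of $\Pi$ with quantification depth at most $n\cdot d$, so it lies in $\mathbb{Q}_{n\cdot d}$. A single application of (Ind) followed by (Cons) with $\FBPInv(P)$ then gives an $\Fsys^{\mathbb{Q}_{n\cdot d}}$-proof. Everything else concerns the separation, and I would organise it around the same invariant-extraction theorems, using them in the contrapositive:
\begin{itemize}
\item any $\FBPIsys_{n-1}^\predicates$-proof yields a safe inductive invariant of depth at most $(n-1)d$ (\Cref{thm:FBP-invariant});
\item any $\FBIsys^\predicates$-, $\FIsys^\predicates$- or $\Fsys^\predicates$-proof yields a safe inductive invariant that is a Boolean combination of predicates from $\predicates$ (\Cref{thm:F-invariant,thm:FB-invariant}).
\end{itemize}
It therefore suffices to exhibit $\Pi$ and a class $\predicates$ of depth-$1$ predicates ($d=1$) such that three things hold. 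First, $\Pi$ has an $\FBPIsys_n^\predicates$-proof. Second, every safe inductive invariant of $\Pi$ has quantification depth at least $n$. Third, no safe inductive invariant of $\Pi$ is a Boolean combination of formulas of depth $1$.

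For the construction I would generalise \Cref{ex:proph-quantifiers} into a chain of nested witnesses. The case $n=1$ is degenerate, and there I would simply reuse \Cref{thm:FB-power}. For $n\ge 2$, I would take unary relations $a_1,\dots,a_{n-1}$ that are fixed along transitions, and binary relations $d_k$ that transitions may only shrink. Shrinking would happen at pairs whose first component is ``active'' at the relevant level. The initial condition would guarantee a chain $x_1,\dots,x_{n-1}$ in which $\neg a_1(x_1)$ holds and each $x_{k+1}$ is a protected $d_k$-successor of $x_k$. The bad condition would be a $\forall\exists$-alternating statement that some link of every such chain is broken. The intended proof introduces a witness $w_1$ by (Proph), as in the example. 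Its soundness premise $\Pi^\mathrm{sound}$ is closed by one (Ind) with the existential predicate $\exists x.\,\varphi_1(x)\wedge m(x)$. Further witnesses $w_k$ would be prophesied relative to $w_{k-1}$ in the same way, each costing one (Ind). The last problem would be closed by a single (B-Ind) with an existentially quantified predicate mentioning only the last witness.

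I would tune the number of levels so that the total number of (Ind) applications is exactly $n$. I would then let $\predicates$ be exactly the set of these $n$ depth-$1$ predicates, namely the prophecy-soundness predicates and the final backward predicate, with witness constants allowed. Unwinding \Cref{def:FBP-invariant} on this proof should yield an invariant of the form $\exists x_1.\,\neg a_1(x_1)\wedge\exists x_2.\,(\cdots\wedge\forall y.\,d(x_{n-1},y))$, of depth $n$. This matches the upper bound and serves as a sanity check that the counting is right.

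The main obstacle is the lower bound: showing that every safe inductive invariant of $\Pi$ needs depth at least $n$ with genuine nesting. Together with \Cref{thm:FBP-invariant}, this simultaneously rules out $\FBPIsys_{n-1}^\predicates$ and all the prophecy-free systems, because a Boolean combination of depth-$1$ predicates has depth $1<n$ once $n\ge 2$. My first attempt would be an Ehrenfeucht--Fra\"iss\'e argument. I would build a reachable state $s$ and a state $s'$ that reaches a bad state, such that $s$ and $s'$ agree on all sentences of depth $n-1$. For example, $s'$ would copy $s$ but ``break'' the chain only at depth $n$, with enough duplicated elements that the Duplicator survives $n-1$ rounds. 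Any invariant of depth $n-1$ then holds in $s'$ because it holds in $s$; by consecution it propagates along the trace from $s'$ and therefore contains a bad state, contradicting safety. If the EF bookkeeping becomes unwieldy, a fallback is to compare against the \updr-style argument from \Cref{sec:overview-proph}, at least for the case $n=2$, and to generalise from there.
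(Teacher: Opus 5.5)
\paragraph{The gap is in the separating example.}
Your first part coincides with the paper's, and so does your contrapositive use of \Cref{thm:FB-invariant,thm:FBP-invariant}. Your count of $n-1$ prophecy steps plus one closing (Ind) is also the right bookkeeping, since every leaf of a proof is an (Ind) and (Proph) is binary.

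In a forward-only chain, each prophecy contributes an existential and only the final (B-Ind) contributes a universal. So the invariant produced by \Cref{def:FBP-invariant} has the shape
$\exists x_1.\,C(x_1)\wedge\exists x_2.\,(L_2(x_1,x_2)\wedge\cdots\wedge\exists x_{n-1}.\,(L_{n-1}(x_{n-2},x_{n-1})\wedge\forall y.\,D(x_{n-1},y)))$.
Here $C$, $L_k$ and $D$ are quantifier-free; they must be, for your soundness predicates $\exists x.\,\varphi_k(w_{k-1},x)\wedge m(x)$ to have depth $1$.

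Such a formula can be re-rooted at the last witness. For $n=3$ it is equivalent to $\exists x_2.\,(\exists x_1.\,C(x_1)\wedge L_2(x_1,x_2))\wedge\forall y.\,D(x_2,y)$, which has depth $2$. In general it is equivalent to a formula of depth $n-1$, and to one of depth $O(\log n)$ if you root in the middle.

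Being a safe inductive invariant is preserved under logical equivalence. So your requirements ``$\Pi$ has an $\FBPIsys_n^\predicates$-proof'' and ``every safe inductive invariant has depth at least $n$'' are incompatible for every $n\ge 3$. No Ehrenfeucht--Fra\"iss\'e argument can establish the second, and the depth argument cannot exclude $\FBPIsys_{n-1}^\predicates$.

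Genuine nesting has to come from quantifier alternation along the chain. The paper's construction is designed around exactly this:
\begin{itemize}
\item It alternates directions (witness $w_2$ forward, $w_3$ backward, and so on), so each (Rev) contributes a negation and the target formula is $\exists x_1\forall x_2\exists x_3\cdots$.
\item It takes $\Pi=(\varphi,\bot,\neg\varphi)$, so every safe inductive invariant is equivalent to $\varphi$. The lower bound then becomes a static definability statement, not a comparison between a reachable state and a bad-reaching one.
\end{itemize}
Your \updr{} fallback would not close the gap either. It only rules out universally quantified invariants, while the invariants at issue here are existential.

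\paragraph{The case $n=1$.}
Your treatment of $n=1$ does not go through. The problem from \Cref{thm:FB-power} is provable in $\FBIsys_1^\predicates$, hence in $\FBIsys^\predicates$, which is exactly what the statement must exclude.

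No other example can work there. A proof with a single (Ind) has a single leaf, hence no binary rule and in particular no (Proph), so it is already an $\FBIsys_1^\predicates$-proof. The separation can only hold for $n\ge 2$, a point the paper's uniform argument also passes over.

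\paragraph{A smaller point for $n\ge 2$.}
You close the soundness premise for $w_k$ with the single predicate $\exists x.\,\varphi_k(w_{k-1},x)\wedge m(x)$. Its initiation requires $\iota$, together with the earlier axioms, to imply $\exists x.\,\varphi_k(w_{k-1},x)$ for every admissible value of $w_{k-1}$. An initial condition that only asserts that some chain exists is not enough, since $w_1$ may then be an inactive element that starts no chain.
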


\begin{proof}
    The first part of the theorem is a direct corollary of \Cref{thm:FBP-invariant}.
    For the second part,
    define the vocabulary with one unary relation and $n-1$ binary relations $\Sigma = \{p_1(\cdot),r_2(\cdot,\cdot),\dots,r_n(\cdot,\cdot)\}$, and the set of predicates $\predicates = \{\exists x_1. p(x_1)\wedge m(x_1)\} \cup
    \{\exists x_i. r_i(w_i, x_i)\wedge m(x_i)\}_{i=2}^n$.
    Denote
    \[ \varphi = 
    \exists x_1. p_1(x_1) \wedge
    \forall x_2. \neg r_2(x_1,x_2) \vee
    \exists x_3. r_3(x_2,x_3) \wedge
    \forall x_4. \neg r_4(x_3,x_4) \vee \cdots\]
    and consider the safety problem $\Pi = (\varphi, \bot, \neg\varphi)$. This safety problem is provable in $\FBPIsys^\predicates_n$, by first adding a witness $w_2$ with axiom $p_1(w_2)$ in the forward direction, then a witness $w_3$ with axiom $r_2(w_2, w_3)$ in the backward direction, then a witness $w_4$ with axiom $r_3(w_3, w_4)$ in the forward direction, and so on.
    If we prove the soundness of the prophecies using the appropriate soundness invariants, the safe inductive invariant computed for this proof via $\FBPInv(\cdot)$ is exactly equivalent to $\varphi$.
    However, $\Pi$ is not provable in $\FBPIsys_{n-1}^\predicates$, $\FBIsys^\predicates$, $\FIsys^\predicates$, or $\Fsys^\predicates$, since any safe inductive invariant must be equivalent to $\varphi$, and there exists no such formula with quantification depth of less than $n$.
\end{proof}

\subsection{Prophecy Heuristics}
\label{sec:proph-heuristics}

When using the forward-backward proof system without prophecy (\Cref{def:fwd-bwd-inc-system}), one can approach the task of constructing a proof, either manually or automatically, by iteratively attempting to find meaningful forward- or backward-inductive invariants and adding them as axioms to the safety problem, then proceeding in the same fashion while exploring different proof depths. Proving the soundness of each step is straightforward since it involves a simple induction query.
However, when  the proof system also allows prophecy (\Cref{def:fwd-bwd-proph-system}),
the soundness premise of the (Proph) rule might require its own inductive invariant or a more complicated proof structure to prove. Therefore, coming up with an appropriate prophecy and proving its associated soundness problem introduces a new challenge in terms of proof search.

In this section we present a heuristic approach for narrowing the search space for prophecies and their soundness invariants. We propose to do so by recognizing patterns of certain sound prophecies, and providing explicit verification conditions for each pattern, which can be checked without explicitly constructing the soundness problem for the prophecy and attempting a direct proof. The idea is that, while our proof system is designed to allow general safety proofs, in practice we might want to restrict the kind of prophecies we use to simplify verifying their soundness.

For instance, consider the prophecy used in \Cref{ex:proph-quantifiers}. We can generalize this kind of prophecy to any formula $\varphi(x)$ which for every initial state holds on some element, i.e., $\iota \Rightarrow \exists x. \varphi(x)$, and is always forward-preserved, i.e., $\varphi(x)\wedge\tau \Rightarrow \varphi'(x)$. If these conditions hold, $\varphi(w)$ is obviously sound, and we would like to add it without exploring proof possibilities for its associated soundness problem. In other words, we would like to use the following inference rule directly:
\begin{prooftree}
\AxiomC{$(\iota,\tau,\beta)^w_{\varphi}$}
    \LeftLabel{(Proph-Fwd)}
    \RightLabel{$\iota \Rightarrow \exists x. \varphi(x)$,
    $\varphi(x) \wedge \tau \Rightarrow \varphi'(x)$}
    \UnaryInfC{$(\iota,\tau,\beta)$}
\end{prooftree}
And indeed, we can use this rule directly by deriving (Proph-Fwd) once for all prophecies of this kind, relying on the verification conditions above.
We derive (Proph-Fwd) for a problem $\Pi=(\iota,\tau,\beta)$, whose premise is $\Pi^w_{\varphi}$, by denoting $\Pi_{\varphi}^\mathrm{sound}=(\iota_m,\tau_m,\beta_m)$ and deriving:
\begin{prooftree}
        \AxiomC{}
        \LeftLabel{(Ind)}
        \UnaryInfC{$(\iota_m, \tau_m, \neg \exists x. \varphi(x)\wedge m(x))$}
        \LeftLabel{(Cons)}
        \UnaryInfC{$\Pi^\mathrm{sound}_{\varphi}$}
        \AxiomC{$\Pi^w_{\varphi}$}
        \LeftLabel{(Proph)}
        \BinaryInfC{$\Pi$}
        \bottomAlignProof
\end{prooftree}

Using this approach, different classes of prophecy can be described and their explicit verification conditions written down and verified. These can be used in proofs as prophecy short-cuts via the appropriate derived inference rules, e.g., (Proph-Fwd) for forward-preserved prophecy formulas. When searching for a proof, these can be used to constrain the kinds of prophecy considered, and to supply explicit verification conditions for each.

We present one more prophecy pattern which turns out to be central to our evaluation in \Cref{sec:eval}.
This prophecy considers a predicate $\theta$ that is forward-preserved, and thus partitions any trace of the program into an initial part where all states satisfy $\neg\theta$, and a tail where all elements satisfy $\theta$. This must be the case because once $\theta$ holds at some point in the trace it must hold thereafter, since it is forward-preserved. The added prophecy $\varphi(x)$ has the property that it holds for all elements before $\theta$ holds; that it is forward preserved for any individual element once $\theta$ holds; and that $\exists x. \varphi(x)$ is itself an inductive invariant.
Intuitively, we can think of $\varphi(x)$ as a property that initially holds for all elements of a state until $\theta$ is activated, upon which (at least) one element is ``selected'' and continues to satisfy $\varphi(x)$ the entire length of the trace. The verification conditions detailed above translate to the following inference rule:
\begin{prooftree}
\AxiomC{$(\iota,\tau,\beta)^w_{\varphi}$}
    \LeftLabel{(Proph-Select)}
    \RightLabel{\shortstack{$\iota\Rightarrow \exists x. \varphi(x)$, $\exists x. \varphi(x) \wedge\tau \Rightarrow \exists x. \varphi'(x)$ \\
    $\neg\theta \Rightarrow \forall x. \varphi(x)$,
    $\theta \wedge \tau \Rightarrow \theta'$,\\    $\varphi(x) \wedge \theta \wedge \tau \Rightarrow\varphi'(x)$}}
    \UnaryInfC{$(\iota,\tau,\beta)$}
    \bottomAlignProof
\end{prooftree}
which can be derived in $\FBPIsys$, using similar notation to before:
\begin{prooftree}
        \AxiomC{}
        \LeftLabel{(Ind)}
        \UnaryInfC{$(\iota_m, \tau_m, \neg ((\neg \theta \wedge \forall x. m(x)) \vee (\theta \wedge \exists x. \varphi(x)\wedge m(x))))$}
        \LeftLabel{(Cons)}
        \UnaryInfC{$\Pi^\mathrm{sound}_{\varphi}$}
        \AxiomC{$\Pi^w_{\varphi}$}
        \LeftLabel{(Proph)}
        \BinaryInfC{$\Pi$}
        \bottomAlignProof
\end{prooftree}

Lastly, note that for each of the two prophecy rules, we do not only get explicit verification conditions, but also an explicit way to construct safe inductive invariants from safety proofs that use them. For example, if we add a prophecy $\varphi(x)$ using the (Proph-Fwd) rule and get an invariant $\psi$ for the problem with the prophecy $\Pi^w_{\varphi}$, then a safe inductive invariant for $\Pi$ is $\exists x. \varphi(x) \wedge \psi[x/w]$. Similarly, if we use the (Proph-Select) rule for adding prophecy, the resulting safe inductive invariant shall be $(\neg \theta \wedge \forall x. \psi[x/w]) \vee (\theta \wedge \exists x. \varphi(x)\wedge \psi[x/w])$.

%% file: 6-eval.tex
\section{Case Study: Proofs of Paxos Variants and Raft}
\label{sec:case-study}

We now demonstrate that the proof techniques developed in \Cref{sec:incremental-proofs,sec:proph} lead to simpler
proofs of the Paxos consensus protocol~\cite{paxos,paxos-made-simple}, several of its variants, and Raft~\cite{raft}.
Both Paxos and Raft are central distributed consensus protocols that have drawn the attention of the verification community. %
Our incremental
proofs use simpler predicates compared to the standard proofs that use a single (forward) inductive invariant.
This simplification of the proofs could be advantageous for automated proof search, as it narrows the space of potential invariant formulas.
While our case study shows promising results for several complex consensus protocols, we acknowledge that there is no guarantee that similar simplifications will occur in other protocols. We hope that the evidence presented here will motivate the community to apply our framework to complex protocols beyond those considered here.
We begin with a detailed illustration using single-decree Paxos, and follow with results from more complex variants.
All proofs reported here are available in the paper's artifact~\cite{artifact} and included as examples in the \href{https://github.com/wilcoxjay/mypyvy}{\texttt{mypyvy} tool}~\cite{mypyvy}.

\subsection{Proof Simplification for Single-Decree Paxos}
\label{sec:paxos}

\input{fig-paxos}

We start our investigation with the first-order logic model of single-decree Paxos listed in \Cref{fig:paxos}.
This model is a slightly simplified version of the model from~\cite{DBLP:journals/pacmpl/PadonLSS17}.\footnote{We abstract away the ``1a'' or ``start round'' messages since they are not important for the proof. We also state the safety property in terms of fresh constants rather than universal quantification (a form of Herbrandization) and focus on the case of consistency between two different rounds, assuming $r_1 < r_2$. Interestingly, the Herbrandization leads to a proof in EPR of the first-order model of Paxos without requiring the rewrites developed in~\cite{DBLP:journals/pacmpl/PadonLSS17}.\label{footnote:simplified}}
The key safety property we wish to prove is that only a
single value can be decided. Paxos ensures this by operating in \emph{rounds},
where each node remembers its current round, and does not participate in lower rounds.
A decision requires a majority (quorum) of votes on a proposed value,
and proposing a value in a new round requires a quorum of nodes that join that round.
When a node joins a round, it reports its latest vote so far (i.e., that in the highest round), and the proposed value must correspond to the latest vote reported by a quorum. This ensures that a proposal at a higher round cannot conflict with a decision at a lower round, including decisions that happen after the proposal was made (using a different quorum). See~\cite{paxos-made-simple,DBLP:journals/pacmpl/PadonLSS17} for more details.

We now show how a forward-backward proof,
whose proof tree is given in \Cref{fig:paxos-proof},
can simplify the predicates needed in the proof of Paxos.
The key source of simplification is the invariant that states that if a value is proposed in a round, then no other value can be \emph{choosable} at a lower round. Choosable here means that it is possible that the value will become decided in the future. That is, that there is a quorum of nodes all of which have either voted for that value or can still vote for it, i.e., they have not yet joined a higher round. In the context of the model listed in \Cref{fig:paxos}, where we assume $r_1 < r_2$ and $v_1 \neq v_2$, this invariant is formalized as follows:
\begin{equation} \label{eq:choosable-simple}
\begin{split}
& \forall r:\sround. \;
r_1 < r \wedge \rproposal(r,v_2) \to \;
\\
&
\qquad
\forall q:\squorum  \exists n:\snode. \;
\rmember(n,q) \land \neg \rvote(n,r_1,v_1) \land \rcurrentround(n) > r_1.
\end{split}
\end{equation}
That is, the invariant states that if $v_2$ is proposed at any round greater than $r_1$, then $v_1$ is no longer choosable at $r_1$ (i.e., in every quorum there is a node that is at a round greater than $r_1$ and did not vote for $v_1$ at $r_1$).
Of all of the invariants in the proof of Paxos, this one has the most complex Boolean structure. Note that it is not a clause because of the alternation of
$\vee$ and $\wedge$.
Interestingly, a forward-backward proof can replace this invariant with simpler predicates. The key point is that in a forward-only proof, we must represent the disjunction between $v_2$ being proposed (at any round greater than $r_1$) and $v_1$ being choosable (at $r_1$). However, in a
forward-backward
proof we can eliminate the disjunction, since we know that in a bad state there must be a decision on $v_1$ at $r_1$, so it must be choosable at every point along the path from init to bad.

\input{paxos-proofs-fig}

Formally, we first argue by a forward
proof step that a decision is supported by a quorum of votes (as is done in the forward-only proof), focusing on $r_1$ and $v_1$:
\begin{equation}
\label{eq:quorum-of-decision-1}   
\rdecision(r_1,v_1) \to
\exists q:\squorum. \forall n:\snode. \;\rmember(n, q) \to \rvote(n,r_1,v_1).
\end{equation}
This step corresponds to the first application of (Inc) in \Cref{fig:paxos-proof}, starting from the root, where $\varphi_1$ is \cref{eq:quorum-of-decision-1} above. This proof rule results in a left-hand premise proved by forward induction, and a right-hand premise where \cref{eq:quorum-of-decision-1} is assumed as an axiom (see \Cref{def:fwd-inc-ind-system}).

Next,
to prove the right-hand premise,
we argue with a backward
proof step that on any path to the bad states $v_1$ must always be choosable at $r_1$:
\begin{equation} \label{eq:choosable-backward}
\exists q:\squorum  \forall n:\snode. \;
\rmember(n,q) \land \rcurrentround(n) > r_1 \to \rvote(n,r_1,v_1).
\end{equation}
This step corresponds to the application of (B-Inc) in \Cref{fig:paxos-proof}, where $\varphi_2$ is \cref{eq:choosable-backward} above, and the left-hand premise is proved via backward induction.
 
Finally, using a forward step we can show that
in the transition system resulting from assuming \cref{eq:quorum-of-decision-1,eq:choosable-backward},
$v_2$ cannot be proposed at any round greater than $r_1$:
\begin{equation} \label{eq:choosable-forward}
\forall r:\sround. \;
r_1 < r \to \neg \rproposal(r,v_2).
\end{equation}
This easily leads to the ultimate safety proof. Formally, this last step corresponds to proving the forward inductive invariant $\varphi_3$ in \Cref{fig:paxos-proof}, which consists of \cref{eq:choosable-forward} and a few supporting invariants.

All in all, we replaced the forward proof with a forward-backward proof which uses less complex auxiliary inductive invariants.
The Boolean structure of \cref{eq:quorum-of-decision-1,eq:choosable-backward,eq:choosable-forward}
is simpler than that of \cref{eq:choosable-simple}.
This 
suggests that it might be easier to find these invariants automatically,
because the search space for clausal invariants is smaller.

Using prophecy, the invariants can be simplified even further, eliminating the existential quantifier in \cref{eq:choosable-backward} and replacing it with a witness obtained for the property expressed in \cref{eq:quorum-of-decision-1}.
That is, using prophecy, the first forward step in the proof can be replaced by the introduction of
a witness to the quorum that ultimately leads to the decision on $v_1$.
Then, in the backward step, we can state that $v_1$ must remain choosable using the same quorum, which is expressed using a purely universally quantified formula. 
That is, the forward invariant gives us a witness that makes the backward invariant simpler.
We note that while the use of prophecy variables in general is not a new proof technique, the synergy between forward-backward reasoning and prophecy is new.

Formally,
the quorum that comes from \cref{eq:quorum-of-decision-1}
is
introduced as a prophecy witness using the rule (Proph-Select) (\Cref{sec:proph-heuristics})
with the prophecy
$\phi_1^q =  \forall n:\snode. \; \rdecision(r_1,v_1) \wedge \rmember(n, q) \to \rvote(n,r_1,v_1)$
and $\theta = \rdecision(r_1,v_1)$.
The proof tree using prophecy is given in \Cref{fig:paxos-proof-proph}.

\subsection{Application to Paxos Variants and Raft}
\label{sec:eval}

\input{tab-eval}

Similarly to the simplification from \Cref{sec:paxos}, we use the incremental proof systems developed in \Cref{sec:incremental-proofs,sec:proph} to prove the safety of various Paxos variants
and an abstract model of Raft.
We compare the relative complexity of 
user-specified predicates used in
each proof method in terms of Boolean structure, the number of quantifiers and quantifier alternations required, as well as the number of
prophecy symbols
used by each proof.

We consider several Paxos variants modeled in~\cite{DBLP:journals/pacmpl/PadonLSS17}: Paxos, Flexible Paxos, Multi-Paxos, and Fast Paxos, all modeled in the effectively propositional (EPR) fragment of first-order logic; as well as the Paxos FOL variant presented (in simplified form) in \Cref{sec:case-study}. 
These Paxos variants have complex inductive invariants that are highly challenging to find automatically. This challenge drove the development of several recent invariant inference techniques, e.g.,~\cite{weaken,pfolic3,DuoAI,Swiss}.
Some of these techniques are able to find the required invariants automatically,
but in some cases
the search involves long runtimes (on the order of hours),
due to
the complexity of the invariants and the resulting search space.

In addition to the standard variants of the original Paxos examples from~\cite{DBLP:journals/pacmpl/PadonLSS17}, we extend our example set with versions of these which differ in the arity of the $decision$ relation of the protocol. In most of the original examples, this relation is modeled to indicate both which value was decided upon, and in which round. We introduce additional versions where the $decision$ relation indicates only the decided value, as well as versions where it indicates the decided value, in which round the decision took place, and which quorum of nodes was responsible for the decision.
These alternative versions,
which differ slightly in the level of abstraction and the complexity of required invariants, allow us to demonstrate the different proof power of the proof systems considered.

As for
Raft,
we consider the version
modeled in EPR from~\cite{modularity-raft-pldi18}. The invariant for this EPR modeling of the protocol is significantly more complex than the ones for Paxos, and it has not been found automatically by the above-mentioned tools. Other works considered variants
or specific parts
of Raft and
automated
some of the proof~\cite{endive,basilisk}.

As a preprocessing step, we rewrite the safety property of all examples using fresh immutable constants rather than universal quantification, similarly to the example in~\Cref{sec:paxos}. 

The complexity of user-specified predicates in
the proof of each example for different proof systems
is detailed in \Cref{tab:eval}.
For Paxos examples,
the suffix of the name of each
example indicates the arity of the $decision$ relation.
In all proofs we consider predicates that are conjunctions of formulas with a certain number of quantifiers (Q), quantifier alternations (A), and Boolean connectives (B). We indicate whether the formulas are purely clausal (contain only disjunctions). Since the class of predicates we consider is closed under conjunction, all forward proofs require no incrementality.
State-of-the-art invariant inference tools, e.g., P-FOL-IC3~\cite{pfolic3} and DuoAI~\cite{DuoAI}, produce invariants of similar complexity to the described forward proofs for their benchmarks.
For forward-backward proofs, we consider a sequential proof structure which incrementally adds a predicate in each step which is either forward-inductive or backward-inductive (relative to the predicates added so far),
similarly to the proof in \Cref{fig:paxos-proof}.
For forward-backward proofs with prophecy, proofs follow a similar pattern, but each step may also be a prophecy step where a sound prophecy is added instead of an inductive invariant
using the (Proph-Select) heuristic presented in~\Cref{sec:proph-heuristics}, similarly to the proof in \Cref{fig:paxos-proof-proph}.
We indicate in parentheses how many prophecy symbols were added in each step (p). We use a straightforward generalization of the prophecy proof rule presented in \Cref{sec:general-prophecy} which allows adding multiple prophecy symbols in each step.

All proofs were manually written and mechanically checked using \texttt{mypyvy}~\cite{mypyvy},
which discharges logical queries to the \textsc{Z3}~\cite{z3} SMT solver.
To check the proofs, \texttt{mypyvy} was extended with syntax and a proof checker for forward-backward proofs with prophecy of the sequential nature described above.
This extension has been incorporated into the \href{https://github.com/wilcoxjay/mypyvy}{\texttt{mypyvy} tool}, and is also available in the paper's artifact~\cite{artifact}.
\Cref{tab:eval} presents the verification time for each considered proof, computed as the median of 15 runs.
All experiments were carried out on a \texttt{z1d.large} AWS machine, which has 2 cores and 16GB of RAM.
The timing results show that for this case study, writing proofs in our proposed proof system noticeably reduces verification times for the more difficult examples, and matches verification times for the examples that are verified relatively quickly (in less than a second).

As can be seen in~\Cref{tab:eval}, for most examples, a forward-backward proof greatly simplifies the Boolean structure of predicates considered, by enabling proofs to use only clausal formulas (rather than ones containing both conjunctions and disjunctions). As expected, there is no reduction in the quantifier alternations required in forward-backward proofs compared to forward proofs. Interestingly,
Raft and the Paxos variants that contain only values in the $decision$ relation 
do not seem to
have forward-backward proofs that simplify forward-only proofs, due to quantification that prevents decomposing non-clausal invariants.
However,
using
prophecy steps, we can avoid this quantification and
again
prove safety using only conjunctions of 
universally quantified
clauses.

On the other hand, in the round-value Paxos variants, forward-backward proofs simplify the Boolean structure of invariants; and adding prophecy allows the proof to avoid
all existential quantification in user-specified predicates.
In fact, in all forward-backward proofs in \Cref{tab:eval}, we only ever need to use prophecy in the forward direction, and only 
in the beginning
of each proof.
Note that since we use the (Proph-Select) heuristic, only $\theta$ is specified by the user, which is quantifier-free in our examples, leading to proofs that only involve universal, clausal predicates. However, the soundness invariant for the prophecy, which is automatically constructed from $\theta$, contains existential quantification and quantifier alternations.

Lastly, we observe that including the deciding quorum in the decision relation of these protocols simplifies even the forward invariants to require no alternation. However, these still include existential quantification, and the predicates in these proofs are non-clausal. Forward-backward proofs simplify these to purely universally quantified clauses, and consequently, these examples cannot be simplified further using prophecy.

%% file: fig-paxos.tex
\lstset{ %
  basicstyle=\footnotesize,
  xleftmargin=2em,
  breakatwhitespace=false,         %
  keepspaces=true,                 %
  keywordstyle=\bf,       %
  language=C,                 %
  otherkeywords={safety,var,require,module,individual,init,action,returns,assert,assume,instantiate,isolate,mixin,before,relation,function,sort,variable,axiom,then,constant,*,local},           %
  numbers=left,                    %
  numbersep=5pt,                   %
  numberstyle=\tiny,               %
  rulecolor=\color{black},         %
  tabsize=8,	                   %
   columns=fullflexible,
   mathescape=true,
}
\begin{figure}[t]
\noindent
\begin{tabular}{ll}
\begin{minipage}{.38\textwidth}
\begin{lstlisting}[basicstyle=\scriptsize,name=paxos]
sort $\snode$, $\squorum$, $\sround$, $\svalue$

relation $\leq$ : $\sround,\sround$
axiom $\theorytotalorder[\leq]$

relation $\rmember$ : $\snode,\squorum$
axiom $\forall q_1,q_2 : \squorum. \,  \exists n:\snode. \,$ 
                      $\rmember(n,q_1) \land \rmember(n, q_2)$

function $\rcurrentround$ : $\snode \to \sround$
relation $\ronebmaxvote$ : $\snode,\sround,\sround,\svalue$
relation $\ronebnovote$ : $\snode,\sround$
relation $\rproposal$ : $\sround,\svalue$
relation $\rvote$ : $\snode,\sround,\svalue$
relation $\rdecision$ : $\sround,\svalue$

init $\forall n,r_1,r_2,v. \; \neg\ronebmaxvote(n,r_1,r_2,v)$
init $\forall n,r. \; \neg\ronebnovote(n,r)$
init $\forall r,v. \; \neg\rproposal(r,v)$
init $\forall n,r,v. \; \neg\rvote(n,r,v)$
init $\forall n,r,v. \; \neg\rdecision(n,r,v)$

individual $r_1,r_2$ : $\sround$
individual $v_1,v_2$ : $\svalue$
axiom $r_1 < r_2 \wedge v_1 \neq v_2$
safety $\neg (\rdecision(r_1,v_1) \wedge \rdecision(r_2,v_2))$
\end{lstlisting}
\end{minipage} &
\hfill
\begin{minipage}{.56\textwidth}
\begin{lstlisting}[basicstyle=\scriptsize,firstnumber=auto,name=paxos]
action $\ajoinround(\text{n} : \snode ,\, \text{r} : \sround)$ {
  require $\rcurrentround(\text{n}) < \text{r}$
  if $\forall r,v. \; \neg \rvote(\text{n},r,v)$ then {
      $\ronebnovote(\text{n},\text{r})$ := true
  } else {
    var maxr, v := $\max \{ (r',v') \mid \rvote(\text{n},r',v') \}$
    $\ronebmaxvote(\text{n},\text{r},\text{maxr},\text{v})$ := true 
  }
  $\rcurrentround(\text{n})$ :=  $\text{r}$
}
action $\apropose(\text{r} : \sround ,\, \text{q} : \squorum)$ {
  require $\forall v. \; \neg\rproposal(\text{r},v) \wedge $
                     $\forall n. \; \rmember(n, \text{q}) \to \, $ 
                           $\ronebnovote(n, \text{r}) \vee \exists r',v'. \; \ronebmaxvote(n,\text{r},r',v')$
  var maxr, v :=                                       # v is arbitrary $\text{if}$ the set is empty
                $\max \{ (r',v') \mid \exists n. \; \rmember(n, \text{q}) \land \ronebmaxvote(n,\text{r},r',v') \}$
  $\rproposal(\text{r}, \text{v})$ := true $\label{line:propose-send}$
}
action $\acastvote(\text{n} : \snode ,\, \text{r} : \sround ,\, \text{v} : \svalue)$ {
  require $\rproposal(\text{r}, \text{v}) \wedge \text{r} = \rcurrentround(\text{n})$
  $\rvote(\text{n}, \text{r}, \text{v})$ := true
}
action $\adecide(\text{r}\!:\!\sround ,\, \text{v}\!:\!\svalue ,\, \text{q}\!:\!\squorum)$ {
  require $\forall n. \; \rmember(n, \text{q}) \to \rvote(n, \text{r}, \text{v})$
  $\rdecision(\text{r}, \text{v})$ := true
}
\end{lstlisting}
\end{minipage}
\end{tabular}
\caption{\label{fig:paxos} Model of Paxos consensus algorithm as a transition system in many-sorted first-order logic.}
\end{figure}

%% file: paxos-proofs-fig.tex
\begin{figure}[t]
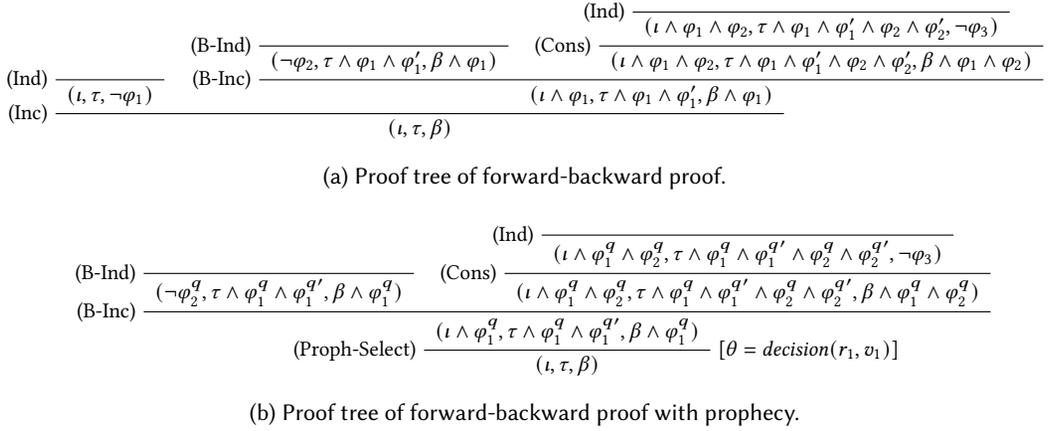

  \centering
  \begin{subfigure}{\linewidth}
    \centering
    \footnotesize
    \begin{prooftree}
        \def\defaultHypSeparation{\hskip 10pt}
        \AxiomC{}
        \LeftLabel{(Ind)}
        \UnaryInfC{$(\iota, \tau, \neg\varphi_1)$}
        \AxiomC{}
        \LeftLabel{(B-Ind)}
        \UnaryInfC{$(\neg\varphi_2, \tau\wedge\varphi_1\wedge\varphi_1', \beta\wedge\varphi_1)$}
        \AxiomC{}
        \LeftLabel{(Ind)}
        \UnaryInfC{$(\iota\wedge\varphi_1\wedge\varphi_2, \tau\wedge\varphi_1\wedge\varphi_1'\wedge\varphi_2\wedge\varphi_2', \neg \varphi_3)$}
        \LeftLabel{(Cons)}
        \UnaryInfC{$(\iota\wedge\varphi_1\wedge\varphi_2, \tau\wedge\varphi_1\wedge\varphi_1'\wedge\varphi_2\wedge\varphi_2', \beta\wedge\varphi_1\wedge\varphi_2)$}
        \LeftLabel{(B-Inc)}
        \BinaryInfC{$(\iota\wedge\varphi_1, \tau\wedge\varphi_1\wedge\varphi_1', \beta\wedge\varphi_1)$}
        \LeftLabel{(Inc)}
        \BinaryInfC{$(\iota, \tau, \beta)$}
        \bottomAlignProof
    \end{prooftree}
    \caption{Proof tree of forward-backward proof.}
    \label{fig:paxos-proof}
  \end{subfigure}\\[1em]
  \begin{subfigure}{\linewidth}
    \centering
    \footnotesize
    \begin{prooftree}
        \def\defaultHypSeparation{\hskip 10pt}
        \AxiomC{}
        \LeftLabel{(B-Ind)}
        \UnaryInfC{$(\neg\varphi_2^q,  \tau\wedge\varphi_1^q\wedge{\varphi_1^q}', \beta\wedge\varphi_1^q)$}
        \AxiomC{}
        \LeftLabel{(Ind)}
        \UnaryInfC{$(\iota\wedge\varphi_1^q\wedge\varphi_2^q, \tau \wedge\varphi_1^q\wedge{\varphi_1^q}' \wedge\varphi_2^q\wedge{\varphi_2^q}', \neg \varphi_3)$}
        \LeftLabel{(Cons)}
        \UnaryInfC{$(\iota\wedge\varphi_1^q\wedge\varphi_2^q, \tau \wedge\varphi_1^q\wedge{\varphi_1^q}' \wedge\varphi_2^q\wedge{\varphi_2^q}', \beta\wedge\varphi_1^q\wedge\varphi_2^q)$}
        \LeftLabel{(B-Inc)}
        \BinaryInfC{$(\iota\wedge\varphi_1^q, \tau\wedge\varphi_1^q\wedge{\varphi_1^q}', \beta\wedge\varphi_1^q)$}
        \LeftLabel{(Proph-Select)}
        \RightLabel{[$\theta=\rdecision(r_1,v_1)$]}
        \UnaryInfC{$(\iota, \tau, \beta)$}
        \bottomAlignProof
    \end{prooftree}
    \caption{Proof tree of forward-backward proof with prophecy.}
    \label{fig:paxos-proof-proph}
  \end{subfigure} 
  \caption{%
    Proof trees for incremental safety proofs of the Paxos protocol in \Cref{fig:paxos}, with side conditions omitted for simplicity. The triple $(\iota,\tau,\beta)$ denotes the safety problem defined by the protocol.
    Formulas $\varphi_1$ and $\varphi_2$ are \cref{eq:quorum-of-decision-1,eq:choosable-backward}, respectively, and $\varphi_3$ consists of \cref{eq:choosable-forward} and a few additional invariants which we omit to simplify the presentation.
    Formulas $\varphi^q_1$ and $\varphi^q_2$ are identical to \cref{eq:quorum-of-decision-1,eq:choosable-backward}, respectively, except $q$ is not quantified and is instead introduced as a fresh prophecy witness by (Proph-Select), which uses $\theta=\rdecision(r_1,v_1)$.}
  \label{fig:paxos-proofs}
\end{figure}

%% file: tab-eval.tex
\begin{table}[t]
    \caption{Parameters of user-specified predicates involved in the proof of various Paxos variants and Raft. Q = number of quantifiers, A = number of quantifier alternations, B = number of Boolean connectives, p = number of prophecy symbols. The class of predicates used consist of closing these under conjunction, so forward proofs are always of length 1. We indicate whether the predicates in the proof are clausal, i.e., include only disjunctions.
    For forward-backward proofs, we write the parameters of each iteration in a separate line, %
    denoting forward and backward steps by \textbf{F} and \textbf{B}, respectively, or \textbf{FP} and \textbf{BP}, respectively, when prophecy is used. %
    A left arrow ($\longleftarrow$) means there is no additional simplification so the proof is identical to the one on the left. The reported verification time for each proof technique is computed as the median runtime of 15 runs.}\label{tab:eval}
    \renewcommand{\tabcolsep}{2.85pt}%
    \centering%
    \scriptsize%
    {\linespread{0.85}\selectfont%
    \begin{tabular}{|c|c|c|c|c|c|c|c|c|c|}
    \hline
        \multirow{2}{*}{\textbf{Example}} & \multicolumn{3}{c|}{\textbf{Forward}} & \multicolumn{3}{c|}{\textbf{Forward-Backward}} & \multicolumn{3}{c|}{\textbf{Forward-Backward + Prophecy}} \\ %
        & Predicates & Clausal & Time & Predicates & Clausal & Time & Predicates & Clausal & Time \\ \hline
        paxos-epr-v & Q=4, A=1, B=5 & $\times$ & 0.59\,s &
        \multicolumn{3}{c|}{$\longleftarrow$} &
        \makecell[l]{\textbf{FP:} Q=1, A=0, B=2, p=2\\\textbf{FP:} Q=1, A=0, B=2, p=2\\\textbf{B:\phantom{P}} Q=1, A=0, B=2\\\textbf{F:\phantom{P}} Q=2, A=0, B=2} & $\checkmark$ & 0.43\,s \\ \hline
        paxos-epr-rv & Q=3, A=1, B=5 & $\times$ & 0.46\,s &
        \makecell[l]{\textbf{F:\phantom{P}} Q=2, A=1, B=2\\\textbf{B:\phantom{P}} Q=2, A=1, B=2\\\textbf{F:\phantom{P}} Q=2, A=0, B=2} & $\checkmark$ & 0.45\,s &
        \makecell[l]{\textbf{FP:} Q=1, A=0, B=2, p=1\\\textbf{FP:} Q=1, A=0, B=2, p=1\\\textbf{B:\phantom{P}} Q=1, A=0, B=2\\\textbf{F:\phantom{P}} Q=2, A=0, B=2} & $\checkmark$ & 0.45\,s \\ \hline
        paxos-epr-rvq & Q=2, A=0, B=5 & $\times$ & 0.38\,s &
        \makecell[l]{\textbf{F:\phantom{P}} Q=1, A=0, B=2\\\textbf{B:\phantom{P}} Q=1, A=0, B=2\\\textbf{F:\phantom{P}} Q=2, A=0, B=2} & $\checkmark$ & 0.4\,s &
        \multicolumn{3}{c|}{$\longleftarrow$} \\ \hline
        flexible-paxos-epr-v & Q=4, A=1, B=5 & $\times$ & 0.54\,s &
        \multicolumn{3}{c|}{$\longleftarrow$} &
        \makecell[l]{\textbf{FP:} Q=1, A=0, B=2, p=2\\\textbf{FP:} Q=1, A=0, B=2, p=2\\\textbf{B:\phantom{P}} Q=1, A=0, B=2\\\textbf{F:\phantom{P}} Q=2, A=0, B=2} & $\checkmark$ & 0.43\,s \\ \hline
        flexible-paxos-epr-rv & Q=3, A=1, B=5 & $\times$ & 0.45\,s &
        \makecell[l]{\textbf{F:\phantom{P}} Q=2, A=1, B=2\\\textbf{B:\phantom{P}} Q=2, A=1, B=2\\\textbf{F:\phantom{P}} Q=2, A=0, B=2} & $\checkmark$ & 0.45\,s &
        \makecell[l]{\textbf{FP:} Q=1, A=0, B=2, p=1\\\textbf{FP:} Q=1, A=0, B=2, p=1\\\textbf{B:\phantom{P}} Q=1, A=0, B=2\\\textbf{F:\phantom{P}} Q=2, A=0, B=2} & $\checkmark$ & 0.44\,s \\ \hline
        flexible-paxos-epr-rvq & Q=2, A=0, B=5 & $\times$ & 0.37\,s &
        \makecell[l]{\textbf{F:\phantom{P}} Q=1, A=0, B=2\\\textbf{B:\phantom{P}} Q=1, A=0, B=2\\\textbf{F:\phantom{P}} Q=2, A=0, B=2} & $\checkmark$ & 0.4\,s &
        \multicolumn{3}{c|}{$\longleftarrow$} \\ \hline
        multi-paxos-epr-nirv & Q=3, A=1, B=5 & $\times$ & 0.56\,s &
        \makecell[l]{\textbf{F:\phantom{P}} Q=2, A=1, B=2\\\textbf{B:\phantom{P}} Q=2, A=1, B=2\\\textbf{F:\phantom{P}} Q=3, A=0, B=2} & $\checkmark$ & 0.54\,s &
        \makecell[l]{\textbf{FP:} Q=1, A=0, B=2, p=1\\\textbf{FP:} Q=1, A=0, B=2, p=1\\\textbf{B:\phantom{P}} Q=1, A=0, B=2\\\textbf{F:\phantom{P}} Q=3, A=0, B=2} & $\checkmark$ & 0.5\,s \\ \hline
        fast-paxos-epr-v & Q=5, A=1, B=7 & $\times$ & 1.5\,s &
        \multicolumn{3}{c|}{$\longleftarrow$} &
        \makecell[l]{\textbf{FP:} Q=1, A=0, B=3, p=3\\\textbf{FP:} Q=1, A=0, B=3, p=3\\\textbf{B:\phantom{P}} Q=1, A=0, B=3\\\textbf{F:\phantom{P}} Q=2, A=0, B=3} & $\checkmark$ & 0.8\,s \\ \hline
        fast-paxos-epr-rv & Q=3, A=1, B=7 & $\times$ & 1.2\,s &
        \makecell[l]{\textbf{F:\phantom{P}} Q=2, A=1, B=3\\\textbf{B:\phantom{P}} Q=2, A=1, B=3\\\textbf{F:\phantom{P}} Q=2, A=0, B=3} & $\checkmark$ & 1\,s &
        \makecell[l]{\textbf{FP:} Q=1, A=0, B=3, p=2\\\textbf{FP:} Q=1, A=0, B=3, p=2\\\textbf{B:\phantom{P}} Q=1, A=0, B=3\\\textbf{F:\phantom{P}} Q=2, A=0, B=3} & $\checkmark$ & 0.81\,s \\ \hline
        paxos-fol-v & Q=5, A=1, B=5 & $\times$ & 1.6\,s &
        \multicolumn{3}{c|}{$\longleftarrow$} &
        \makecell[l]{\textbf{FP:} Q=1, A=0, B=2, p=2\\\textbf{FP:} Q=1, A=0, B=2, p=2\\\textbf{B:\phantom{P}} Q=1, A=0, B=2\\\textbf{F:\phantom{P}} Q=4, A=0, B=3} & $\checkmark$ & 0.62\,s \\ \hline
        paxos-fol-rv & Q=4, A=1, B=5 & $\times$ & 0.65\,s &
        \makecell[l]{\textbf{F:\phantom{P}} Q=2, A=1, B=2\\\textbf{B:\phantom{P}} Q=2, A=1, B=2\\\textbf{F:\phantom{P}} Q=4, A=0, B=3} & $\checkmark$ & 0.65\,s &
        \makecell[l]{\textbf{FP:} Q=1, A=0, B=2, p=1\\\textbf{FP:} Q=1, A=0, B=2, p=1\\\textbf{B:\phantom{P}} Q=1, A=0, B=2\\\textbf{F:\phantom{P}} Q=4, A=0, B=3} & $\checkmark$ & 0.6\,s \\ \hline
        paxos-fol-rvq & Q=4, A=0, B=5 & $\times$ & 0.52\,s &
        \makecell[l]{\textbf{F:\phantom{P}} Q=1, A=0, B=2\\\textbf{B:\phantom{P}} Q=1, A=0, B=2\\\textbf{F:\phantom{P}} Q=4, A=0, B=3} & $\checkmark$ & 0.54\,s &
        \multicolumn{3}{c|}{$\longleftarrow$} \\ \hline
        raft-epr & Q=7, A=2, B=9 & $\times$ & 3.7\,s &
        \multicolumn{3}{c|}{$\longleftarrow$} &
        \makecell[l]{\textbf{F:\phantom{P}} Q=7, A=0, B=8\\\textbf{FP:} Q=1, A=0, B=2, p=3\\\textbf{FP:} Q=1, A=0, B=2, p=3\\\textbf{B:\phantom{P}} Q=1, A=0, B=2\\\textbf{F:\phantom{P}} Q=3, A=0, B=5\\\textbf{F:\phantom{P}} Q=3, A=0, B=5} & $\checkmark$ & 2.8\,s \\ \hline
    \end{tabular}}%
\end{table}

%% file: 7-related.tex
\section{Related Work and Concluding Remarks}
\label{sec:related}

The idea of using a backward analysis to strengthen a forward analysis was first proposed theoretically by Patrick Cousot.\footnote{\emph{cf.} \cite{cousot:tel-00288657} pp. (3)-41,42.} It has since been applied in a variety of ways. For example, in symbolic model checking, a representation of the reachable states of a system, computed by a forward analysis, is typically used as a constraint on the backward fixed point iterations when evaluating future-time temporal operators~\cite{McMillanThesis}. This results in a simplification of the representation of the fixed point approximations, though it does not increase precision, since neither the forward nor the backward analyses are approximate. 

In~\cite{DBLP:conf/oopsla/CousotCLB12} iterated forward and backward analysis is used to strengthen contracts inferred for methods of sequential programs created by refactoring. In~\cite{DBLP:conf/tacas/VizelGS13} forward and backward interpolants are used for reachability analysis.
In~\cite{DBLP:conf/sas/BakhirkinM17} a forward (top-down) and backward (bottom-up) collecting semantics for Horn clauses are alternated to verify sequential programs that cannot be verified using either analysis alone, and it is found that multiple alternations are sometimes needed.  
A more recent method~\cite{DBLP:journals/corr/abs-2508-15137} uses an approximate backward analysis to strengthen the forward construction of an abstract reachability tree using Craig interpolants as in~\cite{Impact}.
None of these works address the problem of quantification, however.

By contrast, in this paper we consider the possible advantages of forward-backward analysis in the particular case of quantified invariants of parameterized protocols. Forward-backward analysis has not been considered in the protocol verification literature. It has been observed~\cite{pfolic3} that the lack of a conjunctive normal form greatly increases the search space for invariant formulas when existential quantifiers are needed. Forward-backward invariant generation can address this problem in two ways. First, in the alternation-free case, we find that a forward-backward approach can in practice reduce the space of invariants to only CNF formulas, despite the presence of existential quantifiers. Second, we find that the approach combines well with the addition of
prophecy
variables, since \emph{(a)} this transformation can eliminate quantifier alternations, and \emph{(b)} we can infer useful prophecy variables for forward analysis from backward analysis, and \emph{vice versa}.  

We can compare this approach to other methods of constraining the invariant search for protocol invariants. One approach is to bias the search toward certain forms that are hypothesized to occur commonly in practice. For example, in~\cite{pfolic3}, formulas of the form $\forall X.\ (C \rightarrow \exists Y. D)$, where $X,Y$ are variable sets and $C,D$ are cubes. This technique is effective for consensus protocols, since it captures the form of an important invariant, but as a heuristic it is possibly over-fitted to such protocols. By using forward-backward analysis, we can achieve a similar effect in a way that is more principled and possibly more general. Many other techniques that can synthesize mixed universal/existential invariants are also based on ad-hoc syntactic biases. These include \cite{DuoAI,DBLP:journals/corr/abs-2404-18048} which are based on reachable state sampling and~\cite{Swiss} which is based on brute-force enumeration of a template space with SMT-based checking. In principle, such methods could benefit from the reduction of formula complexity obtainable with a forward-backward approach. In other words, the forward-backward approach may allow a stronger syntactic bias. For the sampling methods, it would be necessary to sample backward reachable states, which might present computational challenges. For property-driven invariant generation methods, such as~\cite{folic3,pfolic3}, there is a different challenge, since these methods cannot generate an invariant in the absence of a property to be proved.  It is not clear what property should be proved in any forward or backward pass except the last. Still, there would be no difficulty in using a property-driven method in the last pass and a syntax-guided method in earlier passes.
In any event, the benefit of forward-backward analysis in this application is largely orthogonal to the method of analysis and has the potential to improve various techniques that apply syntactic biases to invariant generation. 

History and prophecy variables have very wide application in program verification. History variables were introduced by Clint~\cite{Clint} to solve a problem in using Hoare-style reasoning for coroutines. Prophecy variables were introduced by Abadi and Lamport~\cite{abadi1988the} to resolve non-determinism in proofs of refinement,
and are still widely used in that context~\cite{prophecy-made-simple}.
The idea that history and prophecy variables can be used to simplify the quantifier structure of invariant formulas may be considered folklore. Uncovering the first use of the idea is difficult but, for example, history and prophecy variables are used in~\cite{FlashVerify} to eliminate quantifiers from the invariant of a cache coherence protocol. History and prophecy are explicitly introduced as a method to eliminate quantifiers in~\cite{EagerAbstraction}. 
History variables are also explicitly used for this purpose in~\cite{DBLP:conf/pldi/TaubeLMPSSWW18} in the form of abstract modules. More recent work automatically introduces first-order prophecy variables to eliminate universal quantifiers ~\cite{prophic3,VickMcMillan}. 

What is primarily novel here, relative to the above-mentioned works is that we consider the interaction of history/prophecy with a forward-backward proof heuristic (note that history variables might be considered prophecy variables of the time-reversed system).
We also formalize the side condition for adding prophecy as an auxiliary safety problem.
We observe that existential invariants in one direction can be used as a guide to introduce prophecy variables in the other direction. Moreover, this transformation reduces the search space for forward or backward invariants by eliminating quantifier alternations. Thus, the two methods can be seen as mutually reinforcing.

%% file: 8-proof-appendix.tex
\section{Proofs}
\label{sec:appendix}

\subsection*{Proof of \Cref{thm:F-invariant}}

\begin{proof}
    The proof proceeds by induction on the structure of $P$. The cases where (Ind) or (Cons) is the root the proof are straightforward. If (Inc) is the root of $P$, then 
    $\varphi_1=\FInv(P_1)$ is a safe inductive invariant of a problem $(\iota,\tau,\neg\varphi)$,
    $\varphi_2=\FInv(P_2)$ is a safe inductive invariant of a problem $(\iota\wedge\varphi,\tau\wedge\varphi\wedge\varphi',\beta\wedge\varphi)$, and we claim $\varphi_1\wedge\varphi_2$ is a safe inductive invariant of $(\iota,\tau,\beta)$. 
    And indeed, initiation holds because $\iota\Rightarrow\varphi_1$, $\varphi_1\Rightarrow \varphi$, and $\iota\wedge\varphi \Rightarrow \varphi_2$, and therefore $\iota\Rightarrow\varphi_1\wedge\varphi_2$;
    consecution holds because $\tau\wedge\varphi_1\Rightarrow\varphi_1'$, $\varphi_1\Rightarrow \varphi$, and $\tau\wedge \varphi\wedge \varphi'\wedge \varphi_2\Rightarrow \varphi_2'$, and therefore $\tau \wedge \varphi_1\wedge\varphi_2 \Rightarrow (\varphi_1\wedge\varphi_2)'$; and safety holds because $\varphi_1\Rightarrow \varphi$ and $\varphi_2 \Rightarrow \neg(\beta\wedge\varphi)$, and therefore $\varphi_1\wedge\varphi_2 \Rightarrow \neg\beta$. Thus, $\FInv(P)$ is a safe inductive invariant for the conclusion of $P$.

    Lastly, from \Cref{def:F-invariant}, it is clear that $\FInv(P)$ is a conjunction of the predicates $\varphi\in P$ introduced in all (Ind) rules of the proof. Thus, if $P$ contains $n$ instances of the (Ind) rule, then $\FInv(P)$ is a conjunction of $n$ predicates from $\predicates$.
\end{proof}

\subsection*{Proof of \Cref{thm:FB-invariant}}

\begin{proof}
    The proof proceeds by induction on the structure of $P$. The cases where (Ind), (Cons) or (Inc) is the root of the proof are shown in the proof of \Cref{thm:F-invariant}. If (Rev) is the root of $P$, then we show that if $\varphi$ is a safe inductive invariant of $(\beta, \tau^{-1}, \iota)$, then $\neg\varphi$ is a safe inductive invariant of $(\iota,\tau,\beta)$. Initiation for $\neg\varphi$ holds because $\varphi \Rightarrow \neg \iota$ implies $\iota\Rightarrow\neg \varphi$; consecution holds because $\varphi\wedge\tau^{-1}\Rightarrow \varphi'$ implies $\neg\varphi \wedge\tau \Rightarrow \neg\varphi'$; and safety holds because $\varphi\Rightarrow \neg\beta$ implies $\neg\beta \Rightarrow \varphi$.

    Lastly, from \Cref{def:FB-invariant,def:F-invariant}, it is clear that $\FBInv(P)$ is a Boolean combination of the predicates $\varphi\in P$ introduced in all (Ind) rules of the proof. Thus, if $P$ contains $n$ instances of the (Ind) rule, then $\FBInv(P)$ is a Boolean combination of $n$ predicates from $\predicates$.
\end{proof} 

\subsection*{Proof of \Cref{thm:FBP-invariant}}

The correctness of the invariant computed in \Cref{def:FBP-invariant}, as stated in \Cref{thm:FBP-invariant}, is a direct consequence of \Cref{lem:prophecy-invariant-conversion} linking soundness invariants and invariants of the original problem.

\begin{lemma}
    \label{lem:prophecy-invariant-conversion}
    Let $\Pi=(\iota, \tau, \beta)$ be a transition system over vocabulary $\Sigma$, let $\varphi(x)$ be a formula over $\Sigma$, and $w$ a fresh constant symbol not in $\Sigma$. Let $\xi$ be a soundness invariant for $\varphi(w)$ (over $\Sigma\cup\{m(\cdot)\}$) and  $\psi$ a safe inductive invariant for $\Pi^{w}_{\varphi}$ (over $\Sigma\cup\{w\}$).
   Then $\xi[\psi/m]$, which is closed and over $\Sigma$, is a safe inductive invariant for $\Pi$.

\end{lemma}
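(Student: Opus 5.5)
The plan is to argue semantically, via a substitution lemma that reduces every claim about $\xi[\psi/m]$ to a claim about $\xi$ in a suitable expansion. Fix a structure $s$ over $\Sigma$ with domain $D$. Let $s_m$ be its expansion to $\Sigma\cup\{m(\cdot)\}$ that interprets $m$ as the set $\{a\in D \mid s[w\mapsto a]\models\psi\}$. The substitution lemma states that
\[ s\models\xi[\psi/m] \iff s_m\models\xi, \]
and more generally that this holds for every assignment to the free variables. I would prove it by structural induction on $\xi$. The only nontrivial case is an atom $m(t)$, which is replaced by $\psi[t/w]$. Here $s,\sigma\models\psi[t/w]$ holds iff $s[w\mapsto t^{s,\sigma}]\models\psi$, and this uses the standing assumption that $\xi$ and $\psi$ use distinct variables, so no capture occurs. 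The same construction applies to a pair of pre/post structures $(s,t)$, with $m$ and $m'$ interpreted separately from $s$ and $t$. Closedness is immediate: $\xi$ is closed, and every occurrence of $w$ in $\psi$ is replaced by a term over $\Sigma$ and the bound variables of $\xi$.

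With the substitution lemma in hand, I would check the three conditions for $\xi[\psi/m]$ in turn, in each case transferring the corresponding condition for $\xi$ in $\Pi^\mathrm{sound}_\varphi$.

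\textbf{Initiation.} Suppose $s\models\iota$. To obtain $s_m\models\xi$, it suffices to show that $s_m$ satisfies the initial condition of $\Pi^\mathrm{sound}_\varphi$, namely $\forall x.\varphi(x)\to m(x)$. If $s\models\varphi(a)$, then $s[w\mapsto a]\models \iota\wedge\varphi(w)$, so initiation of $\psi$ gives $s[w\mapsto a]\models\psi$, which is exactly $m(a)$ in $s_m$.

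\textbf{Consecution.} Suppose $(s,t)\models\tau$ and $s\models\xi[\psi/m]$. I would show that $(s_m,t_m)$ satisfies the transition formula of $\Pi^\mathrm{sound}_\varphi$. Take $a$ with $m(a)$, $\varphi(a)$ in $s$, and $\varphi(a)$ in $t$. Then $(s[w\mapsto a],t[w\mapsto a])$ satisfies $\varphi(w)\wedge\tau\wedge w'=w\wedge(\varphi(w))'$, while $\psi$ holds in the pre-state. Consecution of $\psi$ then yields $t[w\mapsto a]\models\psi$, that is, $m'(a)$. Consecution of $\xi$ gives $t_m\models\xi$, and the substitution lemma converts this to $t\models\xi[\psi/m]$.

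\textbf{Safety.} Suppose $s\models\xi[\psi/m]\wedge\beta$. Then $s_m\models\xi\wedge\beta$, and safety of $\xi$ produces some $a$ with $\varphi(a)$ and $m(a)$. This means $s[w\mapsto a]\models\psi\wedge\beta\wedge\varphi(w)$, which contradicts the safety of $\psi$ for $\Pi^w_\varphi$.

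I expect the main obstacle to be the substitution lemma itself, specifically being careful about four things: terms $t$ that contain bound variables of $\xi$; the primed copy, where $m'(t)$ must become $(\psi[t/w])'$; the requirement that $s$ and $t$ share a domain, so that the same $a$ can serve as the witness in both; and keeping $w$ fresh so that interpreting it does not interfere with $\Sigma$. The three checks above are routine once that lemma is in place.
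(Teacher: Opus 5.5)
Your proposal is correct and follows essentially the same route as the paper's proof. Like the paper, you expand each state by interpreting $m$ as the set of elements $a$ for which $\psi$ holds with $w\mapsto a$, and then transfer initiation, consecution, and safety of $\xi$ in $\Pi^\mathrm{sound}_\varphi$ using the corresponding conditions for $\psi$ in $\Pi^w_\varphi$; your explicit substitution lemma just spells out the step the paper treats as immediate from the definition of the expanded state.
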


\begin{proof}[Proof (\Cref{lem:prophecy-invariant-conversion})]
    For convenience, we write $\varphi(x)$ and $\psi(x)$ for the formulas over $\Sigma$ with free variable $x$, and $\varphi(w)$ and $\psi(w)$ for the closed formulas over $\Sigma\cup\{w\}$
    (in particular, $\psi$ in the lemma is $\psi(w)$).

    \textit{Initiation.} Let $\sigma\models\iota$ be an initial state. Extend it to $\Hat{\sigma}$ by interpreting $m$ as $m^{\Hat{\sigma}} = \psi^{\sigma}$.
    Due to initiation for $\psi(w)$ it holds that $\iota\wedge \varphi(w)\Rightarrow \psi(w)$, and so
    $\Hat{\sigma},x\mapsto a \models \iota\wedge \varphi(x)$ implies $\Hat{\sigma},x\mapsto a \models m(x)$ for all $a\in \Hat{\sigma}$. Therefore,
    $\Hat{\sigma}$ is an initial state of the soundness problem in \Cref{thm:sound-prophecy-as-safety}, and $\Hat{\sigma}\models \xi$ (initiation for $\xi$).
    Immediately from the definition of $\Hat{\sigma}$, we get $\Hat{\sigma}\models \xi[\psi/m]$.
    
    \textit{Consecution.} Let $(\sigma,\sigma') \models \tau$ be a transition with $\sigma\models \xi[\psi/m]$.
    Extend $\sigma$ and $\sigma'$ to $\Hat{\sigma}$ and $\Hat{\sigma}'$, respectively, by interpreting 
    $m^{\Hat{\sigma}} = \psi^{\sigma}$ and
    $m^{\Hat{\sigma}'} = \psi^{\sigma'}$, which means $\Hat{\sigma}\models \xi$.
    Due to consecution for $\psi(w)$ it holds that $\psi(w)\wedge \varphi(w) \wedge \tau \wedge w' = w \wedge (\varphi(w))'\Rightarrow (\psi(w))'$, which in particular implies that $\Hat{\sigma}, x\mapsto a \models m(x)\wedge \varphi(x)\wedge \varphi'(x)$ implies $\Hat{\sigma}', x\mapsto a \models m'(x)$ for all $a\in \Hat{\sigma}$.
    Therefore, $(\Hat{\sigma},\Hat{\sigma}')$ is a transition of the soundness problem in \Cref{thm:sound-prophecy-as-safety}, and thus $\Hat{\sigma}'\models \xi$ (consecution for $\xi$). Immediately from the definition of $\Hat{\sigma}'$, we get $\sigma'\models \xi[\psi/m]$.

    \textit{Safety.} Let $\sigma$ be a state such that $\sigma\models \xi[\psi/m]$, and as usual extend it to $\Hat{\sigma}$ by interpreting $m$ as $m^{\Hat{\sigma}} = \psi^{\sigma}$. Thus, $\Hat{\sigma} \models \xi$, and due to the safety of $\xi$ we get $\Hat{\sigma} \models \neg \beta\vee \exists x. \varphi(x) \wedge m(x)$. Consequently, there exists an extension of $\sigma$ to $\sigma_{w}$ that interprets $w$ such that $\sigma_{w} \models \neg \beta\vee (\varphi(w) \wedge \psi(w))$. Due to the safety of $\psi(w)$ we get $\sigma_{w}\models \neg\beta$, and since $\beta$ is over $\Sigma$, it also holds that $\sigma\models\neg\beta$.
\end{proof}